\title{Local Computation Algorithms for Coloring of Uniform Hypergraphs}
\author{Andrzej Dorobisz}
\address{Theoretical Computer Science Department, Faculty of Mathematics and Computer Science, Jagiellonian University, Krak\'{o}w, Poland}
\email{andrzej.dorobisz@tcs.uj.edu.pl}
\author{Jakub Kozik}
\address{Theoretical Computer Science Department, Faculty of Mathematics and Computer Science, Jagiellonian University, Krak\'{o}w, Poland}
\email{jakub.kozik@uj.edu.pl}
\keywords{Property B, Hypergraph Coloring, Local Computation Algorithms}
\thanks{This work was partially supported by Polish National Science Center (2016/21/B/ST6/02165)}
\newenvironment{algorithm-hbox}{\hbadness=10000\begin{algorithm}}{\end{algorithm}}
\newcommand{\E}[1]{\mathbb{E}{[#1]}}
\newcommand{\Et}[1]{\E{\textrm{#1}}}
\newcommand{\pr}[1]{\Pr{[#1]}}
\newcommand{\prt}[1]{\pr{\textrm{#1}}}
\newcommand{\eps}{\varepsilon}
\newcommand{\eul}{\mathrm{e}}
\newcommand{\emptyseq}{\epsilon}
\newcommand{\ceil}[1]{\lceil{#1}\rceil}
\newcommand{\Oh}[1]{\mathcal{O}{\left(#1\right)}}
\newcommand{\oh}[1]{o{\left(#1\right)}}
\newcommand{\event}[1]{\mathcal{X}(#1)}
\newcommand{\polylog}{\mathrm{polylog}}
\newcommand{\fun}[1]{\textsc{#1}{}}                     
\newcommand{\state}[1]{\texttt{#1}}                     
\newcommand{\pBadTail}{P_{W}}
\newcommand{\pMonoTail}{P_{M}}
\newcommand{\rc}{\mathcal{C}}           
\newcommand{\Hrc}{H_{\alpha}(\rc)}      
\newcommand{\rcol}{\rho}
\newcommand{\type}[1]{(#1)}
\newcommand{\ort}{\phi}
\newcommand{\lab}{\sigma}
\newcommand{\alStar}{\alpha_\star}
\newcommand{\vCompBound}{2k\log(m)}  
\theoremstyle{plain}
\newtheorem{theorem}{Theorem}
\newtheorem{observation}[theorem]{Observation}
\newtheorem{remark}[theorem]{Remark}
\newtheorem{definition}[theorem]{Definition}
\newtheorem{proposition}[theorem]{Proposition}
\newtheorem{corollary}[theorem]{Corollary}
\newtheorem{claim}[theorem]{Claim}
\begin{document}

\maketitle

\begin{abstract}
We present a progress on local computation algorithms for two coloring of $k$-uniform hypergraphs.
    We focus on instances that satisfy strengthened assumption of Local Lemma of the form
    $
        2^{1-\alpha k} (\Delta+1) \eul < 1,
    $
    where $\Delta$ is the bound on the maximum edge degree of the hypergraph.
    We discuss how previous works on the subject can be used to obtain an algorithm that works in polylogarithmic time per query for
        $\alpha$ up to about $0.139$.
Then, we present a procedure that, within similar bounds on running time, solves wider range of instances by allowing $\alpha$ at most about $0.227$.
\end{abstract}

\section{Introduction}
For a long time the problem of hypergraph coloring served as a benchmark problem for various probabilistic techniques. 
In fact, the problem of two coloring of linear hypergraphs was one of the main motivations for introducing Local Lemma in the seminal paper of Erd\H{o}s and Lov\'{a}sz \cite{EL1975}.
It is well known that the problem of hypergraph two coloring is NP-complete \cite{Lov73} and this result holds even for hypergraphs with all edges of size 3.
In this work we present a progress on local algorithms for two coloring of uniform hypergraphs within the framework of Local Computation Algorithms.

We are going to work with uniform hypergraphs.
For the rest of the paper, $n$ is used to denote the number of vertices of considered hypergraph, $m$ its number of edges and $k$ sizes of the edges.
We assume that $k$ is fixed (but sufficiently large to avoid technical details)
    and that $n$ tends to infinity.
For a fixed hypergraph we denote by $\Delta$ its maximum edge degree.
In the instances that we are going to work with, $\Delta$ is bounded by a function of $k$ so in terms of $n$ it is $\Oh{1}$.
That implies that the number of edges $m$ is at most linear in $n$.
To avoid irrelevant technical issues we assume that considered hypergraphs do not have isolated vertices.
Then, we also have $m = \Theta(n)$.

\subsection{Local Computation Algorithms}

Rubinfeld, Tamir, Vardi and Xie proposed in \cite{fast-local} a general model of sublinear algorithms, called Local Computation Algorithms (LCA).
It is intended to capture the situation when some computation has to be made on a large instance,
but at any specific time only part of the answer is required.
Interaction with the algorithm is organized in the sequence of queries about fragments of a global solution.
The algorithm shall answer each consecutive query in sublinear time (wrt the size of the input), systematically producing a partial answer that is consistent with some global solution.
The procedures can be randomized.
As a consequence, the model allows also the algorithms to fail occasionally.

Formally, for a fixed problem, a procedure is \emph{$(t,s,\delta)$-local computation algorithm},
    if for any instance of size $n$ and any sequence of queries, it can consistently answer each of them in time $t(n)$
    using up to $s(n)$ local computation memory.
The value $\delta(n)$ shall bound the probability of failure for the whole sequence of queries.
It is usually demanded that $\delta(n)$ is small.
Local computation memory, the input, and the source of random bits are all represented as tapes with random access (the last two are not counted in $s(n)$ limit).
The local computation memory can be preserved between queries, e.g. to store some partial results determined in the previous calls.
For the precise general definition of the model we refer to \cite{fast-local}.

A procedure is called \emph{strongly local computation algorithm} if both $t$ and $s$ are at most polylogarithmic.
It is \emph{query oblivious} if the returned solution does not depend on the order of the queries.
It usually indicates, that the algorithm uses computation memory only for local computations and no information is preserved between calls.
In a followup paper \cite{ARVX}, Alon, Rubinfeld, Vardi and Xie presented generic methods of reducing the space requirements and necessary number of random bits of LCA procedures.
In the same paper, these techniques were applied to the example procedures from \cite{fast-local} converting them to strongly LCAs.
The numbers of random bits used in these algorithms have been reduced to polylogarithmic.

One of the first examples from \cite{fast-local} of a problem solved within LCA framework was finding a proper two coloring of a uniform hypergraph\footnote{
    Two coloring of a hypergraph is \emph{proper} if no edge is monochromatic.
}.
The objective was to construct a~procedure that for a given hypergraph, quickly outputs the colors of the queried vertices in such a way that,
    at any point, the returned partial coloring can be completed to a proper one.
In order to obtain satisfying results, input instances had to meet some constraints.
Not surprisingly, the constraints needed for that procedure to work in sublinear time are stronger than necessary conditions for the proper coloring to exist.

The main focus of this paper is on local computation algorithms for two coloring of uniform hypergraphs,
    that work in polylogarithmic time and with $\delta(n)= \Oh{1/n}$.
The basic version of the procedures presented here use linear amount of memory and random bits (on the average).
The techniques of reducing space and random bits from \cite{ARVX} can be also applied to our procedure.
We are going to discuss these improvements in the full version of the paper.

\subsection{Constructive LLL}

Lov\'{a}sz Local Lemma is one of the most important tools in the field of local algorithms.
Constructive aspects of the lemma are essential for our work.
For many years, Local Lemma resisted attempts to make it efficiently algorithmic. 
A breakthrough has been made by Moser in 2009.
In cooperation with Tardos, Mosers ideas has been recasted in \cite{moser-tardos} into general constructive formulation of the lemma.
They showed that,
    assuming variable setting of LLL,
    a natural randomized procedure called RESAMPLE,
    finds an instance that avoids all the bad events
    in the expected time, that in typical cases, is linear in the number of events.
This result also triggered a number of later refinements.

Adjusting RESAMPLE to LCA model remains one of the most challenging problem in the area.
At the same time, it turns out that some previous results on algorithmization of Local Lemma
    can be more easily adapted.
In fact, the algorithm for hypergraph coloring from \cite{fast-local} was based on a work of Alon \cite{alon} that in turn was built on the ideas of Beck \cite{Beck91}.

Common element of these approaches is that the assumptions of Local Lemma have to be strengthened in order for the analyzed procedures to be efficient.
Since, for our problem of two coloring, we are dealing with symmetric events,
the assumption of symmetric Local Lemma is
$
    2 p \; (\Delta+1) \; \eul < 1,
$
where, $p$ denotes the probability that, in an uniformly random coloring, a fixed edge is monochromatic in a fixed color
(i.e. $p=2^{-k}$).
The strengthened assumptions take form
\begin{equation}
\label{eq:parametrized-LLL-condition}
    2 \; p^{\alpha} \; (\Delta+1) \; \eul < 1,
\end{equation}
which for $\alpha=1$ captures the standard assumptions.
As a result, the above inequality constraints $\Delta$, and the constraint gets more restrictive as $\alpha$ gets smaller.

An important consequence of the strengthened condition \ref{eq:parametrized-LLL-condition} is that edges of the hypergraph can be trimmed to size $\alpha k$ in an arbitrary way,
    and the resulting instance would still satisfy general assumptions of Local Lemma.
In particular, it would be still two colorable.
Czumaj and Scheideler \cite{non-uniform}, utilized that flexibility to extend the work of Alon to the problem of coloring non-uniform hypergraphs.
In their work, the possibility of trimming edges has been used already at the phase of determining the components to be recolored.
A consequence of that approach is that the shapes of these components depend on the particular order in which the vertices of the hypergraph are processed.

\subsection{RESAMPLE as LCA}

Achlioptas, Gouleakis and Iliopoulos in \cite{AGI} analyzed the behavior of RESAMPLE in the context of LCA.
We refer to publicly available full version of this paper \cite{AGI-arxiv}.
They introduced additional parameter -- the number of queries that are to be answered in sublinear time.
Their analysis is most significant when this number is sublinear.
We do not follow this idea and stick to the original formulation of \cite{fast-local}.
To avoid unnecessary technical details we quote the results of \cite{AGI-arxiv} with the number of queries set to the number of vertices (that translates to setting $\alpha=1$ in Theorem 3.1 of \cite{AGI-arxiv}).
Moreover we present the statement of their main theorem for the fixed problem of two coloring of $k$-uniform hypergraphs.
That result is formulated in terms of maximum vertex degree of a hypergraph, denoted by $d$.
Parameter $\lambda$, used in the original formulation, was related to the coefficients from the assumptions of asymmetric Local Lemma.
For the problem at hand we would have $\lambda= \lambda(n) = \oh{1}$.
To avoid that dependency we formulate that theorem for any fixed (but in particular arbitrarily small) value of $\lambda$.

\begin{theorem}{(\cite{AGI-arxiv})}
    For $\eps,\lambda>0$ let $k$ and $d$ be such that $2^{1-k} (k d+1) \eul < 1- \eps$.
    Define $\zeta = \frac{\log(1/(1-\eps))}{\log{k d}}$,
    and let $\beta, \gamma >0$ be constants such that $\beta \zeta > 1+ \gamma + \lambda$.
    Then, there exists a $(n^\beta, \Oh{n},n^{-\gamma})$-local computation algorithm
        for the problem of two coloring of $k$-uniform hypergraphs
            with maximum vertex degree at most $d$.
\end{theorem}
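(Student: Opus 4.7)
The plan is to cast the Moser--Tardos RESAMPLE procedure as an LCA by locally simulating it around the queried vertex. First, I would fix a uniformly random initial coloring by attaching to each vertex a random bit read lazily from the shared random tape; this ensures a consistent global coloring across all queries. For a query at vertex $v$, the algorithm must return the color of $v$ after all monochromatic edges have been successively resampled until none remain. The classical witness tree argument shows that the color of $v$ depends only on those resampling events whose witness structure is anchored at an edge containing $v$, so it suffices to discover the portion of the dependency graph reachable from $v$ by a chain of resampled events.

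The algorithm would perform a bounded exploration in the dependency graph starting from the edges containing $v$: examine each adjacent potential bad event, and, if the current local state would force a resampling of it, recurse into its dependency neighborhood, up to some depth $\ell$. Since each edge has $k$ vertices and each vertex lies in at most $d$ edges, the size of the explored region is bounded by $(kd)^\ell$. Setting $\ell = \lceil \beta \log n / \log(kd) \rceil$ keeps the running time per query at most $n^\beta$, up to polynomial bookkeeping overhead.

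For correctness, I would invoke the witness tree bound of Moser--Tardos: under the strengthened condition $2^{1-k}(kd+1)\eul < 1-\eps$, the probability that any witness tree of size $\ell$ rooted at a specific edge is triggered is at most $(1-\eps)^\ell$, up to polynomial factors in $kd$. Hence the probability that the local simulation at $v$ fails to resolve within depth $\ell$ is also at most $(1-\eps)^\ell$. Union bounding over all $n$ queries, the total failure probability is at most $n \cdot (1-\eps)^\ell$, which with the chosen $\ell$ equals $n^{1 - \beta \zeta}$; the hypothesis $\beta \zeta > 1 + \gamma + \lambda$ then guarantees this is at most $n^{-\gamma}$, with the $\lambda$ slack absorbing the polynomial prefactors arising from the enumeration of witness trees and from the overhead of traversing the dependency graph.

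The main obstacle will be making the witness tree argument quantitative enough to yield precisely the clean parameter relation $\beta \zeta > 1 + \gamma + \lambda$. This requires carefully counting shape-labeled witness trees of depth $\ell$ (at most $(kd)^\ell$ of them) and showing that each one carries probability at most $(2^{1-k})^\ell$, so that the total mass is at most $\bigl(2^{1-k} \cdot (kd+1)\bigr)^\ell$, which is smaller than $(1-\eps)^\ell/\eul$ by the strengthened assumption. The space bound $\Oh{n}$ follows from storing the initial random coloring and a list of events examined across queries; retaining this state is what enables consistency between queries without repeating exploration, and the query-oblivious nature of the witness tree analysis is what makes the answer at $v$ independent of the order in which queries arrive.
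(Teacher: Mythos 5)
This theorem is quoted from \cite{AGI-arxiv}; the present paper does not prove it, but merely restates it and then, in the following paragraph, works out its quantitative consequences (showing its hypothesis forces condition~(\ref{eq:parametrized-LLL-condition}) to hold with $\alpha$ strictly below $1/2$). There is therefore no in-paper proof against which your proposal can be compared.

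That said, your reconstruction is broadly in line with how the cited result is obtained: simulate RESAMPLE locally around the queried vertex, explore the dependency graph to depth $\ell \approx \beta\log n / \log(kd)$ so that per-query work is about $(kd)^{\ell} = n^{\beta}$, and use the witness-tree count together with the $\eps$-slack hypothesis $2^{1-k}(kd+1)\eul < 1-\eps$ to bound the probability of deeper exploration by roughly $(1-\eps)^{\ell} = n^{-\beta\zeta}$; a union bound over $n$ queries then yields failure probability about $n^{1-\beta\zeta} \le n^{-\gamma}$ under $\beta\zeta > 1+\gamma+\lambda$, with $\lambda$ absorbing polynomial prefactors. Your parameter arithmetic linking $\ell$, $\zeta$, and the exponents is correct. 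What you elide is precisely the technical core of the cited argument: you assert, but do not establish, that the Moser--Tardos output at $v$ is determined by witness structures anchored at edges containing $v$, that a bounded local exploration reproduces that output, and that these local simulations remain mutually consistent across queries and independent of query order. Making those precise (a fixed deterministic resampling order, a truncation-with-explicit-failure rule so the $n^{\beta}$ time bound holds unconditionally, and a careful statement of what ``depends only on'' means for the final color) is where the actual work of \cite{AGI-arxiv} lies, and a complete proof would have to supply it.
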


Note that we need $\beta <1$ in order to obtain an algorithm with a sublinear time per query.
To understand better the above assumptions we discuss shortly their consequences.
Let $C = \frac{1+\gamma+\lambda}{\beta}$, note that $C>1$.
The assumptions of the theorem imply that $\zeta$ has to satisfy $\zeta > C$.
By the definition of $\zeta$, we get
$
    1/(1-\eps)>(k d)^C,
$
and hence
$
    \eps > 1-(k d)^{-C}
$.
Then, the constraint on $k$ and $d$ can be rewritten to
$
    2^{1-k} (kd+1)\; \eul < (kd)^{-C}.
$
That implies
$
    (kd)^{C+1} < \eul^{-1} 2^{k-1},
$
and hence $k d < \eul^{\frac{-1}{C+1}} 2^{\frac{k-1}{C+1}}$.
We obtain a constraint on $k d$ that is slightly more restrictive than taking $\alpha= 1/(C+1)$ in (\ref{eq:parametrized-LLL-condition}).
That shows that the instances for which the procedure answers questions in sublinear time,
    satisfy condition (\ref{eq:parametrized-LLL-condition}) with $\alpha$ strictly smaller than $1/2$.
Moreover, when $\alpha$ is close to $1/2$ we obtain quite weak bounds on the running time per query.
In particular, if we want answers in time $O(n^{1/2})$, we need $\alpha< 1/3$.
Additional property of that procedure is that the running time per query is always of the order $n^\beta$.
The previous attempts (built on \cite{alon}) answers queries in polylogarithmic time.
The same property holds for the procedure described in the current paper.
As a result, for $\alpha$ up to about $0.227$ we obtain much more efficient algorithm, then the adaptation of RESAMPLE described in \cite{AGI-arxiv}.

\section{Previous works and the main result}
\label{sec:Alon_and_main}

In order to give a fair comparison of our work with the previous results, we begin this section with sketching what
can be gained by applying some quick fixes to the hypergraphs coloring procedure from \cite{fast-local}.
It is more convenient to base our presentation on the Alons approach of \cite{alon} which was the direct inspiration for the developments of \cite{fast-local}.

All the considered procedures start with some uniformly random coloring of the vertices, called \emph{initial coloring}.
Then, an edge of the hypergraph is called \emph{$\alpha$-bad} if it (initially) contains more than $(1-\alpha)k$ vertices of one color.
We usually omit $\alpha$, when it is fixed in the context.
\emph{Bad component} is a maximal set of vertices that is spanned by a connected set of bad edges.

\subsection{Alons procedure with RESAMPLE in the second phase}

The original procedure of Alon \cite{alon} works under assumption (\ref{eq:parametrized-LLL-condition}) with $\alpha$ up to about $1/8$.
It finds a proper coloring in polynomial time (in the classical, non-local sense).
The basic variant of the procedure works in two phases.
In the first phase, it identifies some disjoint components of the hypergraph that need special attention.
In the second phase, those components are independently recolored.

The components are determined by the initial coloring in the following way.
The algorithm starts with exploring bad components.
Then, as long as there exist components that intersect the same edge, such components are merged into one.
Only the vertices of the components are recolored in the second phase.
All the remaining ones get colors given by the initial coloring.
The assumption on edge degrees, implies that the sizes of such constructed components, are at most logarithmic (whp).
At this point, most edges of the hypergraph are properly colored by the vertices outside of the components.
Every edge that is not, intersects exactly one component and for the second phase it is trimmed to that intersection.
The construction ensures that the intersections are large enough to guarantee that the hypergraph of the trimmed edges satisfies the assumptions of Local Lemma.
Therefore, a proper coloring of the components exists and can be found for every component independently.
Since the components are of logarithmic size, exhaustive search succeeds in polynomial time.

The straightforward adaptation of that basic version of the procedure to the framework of LCA yields some kind of local algorithm.
However, due to an exhaustive search applied to components, the running time of a query can be superlinear.
For that reason the authors of \cite{fast-local} used the variant sketched in \cite{alon} in which the first phase of cutting out components was iterated twice (with different parameters).
After two iterations, sizes of the subproblems that are left to solve are bounded by roughly $\log\log(n)$ so exhaustive search on them is polylogarithmic.
Unfortunately, ensuring that the components are still colorable comes with a price of stronger initial assumptions.

The described procedure can be improved by applying RESAMPLE in the second phase.
As a consequence, RESAMPLE started in bounded components, works in logarithmic time on the average.
Using standard techniques,
    we can derive a polylogarithmic bound on the running time, that ensures high probability of success across all queries.

In this variant, edge degrees of the hypergraph have to be constrained in such a way that
    Local Lemma assumptions hold after the first trimming phase
    and (whp) the components are of logarithmic size
\footnote{Relaxing the size of components to be sublinear does not make any substantial difference.}.
Sufficient conditions from \cite{alon} for these two conditions to hold take the form:
\begin{equation}
    \label{eq:AlonsConditions}
  2 \eul (\Delta+1) < 2^{\alpha k}, \\
  4 \eul \Delta^3 < 2^{(1-H(\alpha))k},
\end{equation}
where $H(.)$ is the binary entropy function.
Both these conditions constrain $\Delta$.
For small values of $\alpha$ the first inequality is stronger.
When $\alpha$ is increased, at some point the second one takes over.
The value of $\alpha$ that allows the largest $\Delta$, for $k\rightarrow \infty$ tends to the unique solution $\alpha_A$ of the equation
\[
    \alpha = \frac{1-H(\alpha)}{3}.
\]
Close to that point, both inequalities have roughly the same impact on the allowed values of $\Delta$.
For any $\alpha<\alpha_A$ and large enough $k$, 
    if the first inequality is satisfied, then second condition also holds.
This is why in the following formulation, we focus only on the first condition.
Alon in \cite{alon} commented that for $\alpha=1/8$ and sufficiently large $k$, one can allow $\Delta$ up to $2^{k/8}$.
To be more precise, we use $\alpha_A$ as the reference point for further considerations.
Its approximate value is $0.139$.
\begin{theorem}[after \cite{alon} and \cite{fast-local}]
    \label{thm:Alon}
    For every $\alpha < \alpha_A\approx 0.139$,
    and all sufficiently large $k$,
    there exists a local computation algorithm that,
    in polylogarithmic time per query,
        solves the problem of two coloring
        for $k$-uniform hypergraphs,
            in which maximum edge degree $\Delta$ satisfy
\[
  2 \eul (\Delta+1) < 2^{\alpha k}.
\]
\end{theorem}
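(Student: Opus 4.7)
The plan is to adapt Alon's two-phase scheme from \cite{alon} to the LCA framework, replacing the exhaustive search of the second phase by RESAMPLE in order to keep the per-query cost polylogarithmic. I would first sample a uniformly random \emph{initial coloring} of all vertices via a pseudorandom source with polylogarithmic read time (as in \cite{ARVX}), and call an edge \emph{bad} whenever it contains more than $(1-\alpha)k$ vertices of one color, so that the probability that a fixed edge is bad is bounded by roughly $2^{1-(1-H(\alpha))k}$. Bad components are then formed by taking connected clusters of bad edges and iteratively merging any two clusters that share a common hypergraph edge, exactly as in \cite{alon}.

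To answer a query for the color of a vertex $v$, I would explore the hypergraph outward from $v$, probing initial colors of nearby vertices through the pseudorandom source, in order to determine which merged component (if any) contains $v$. A standard Beck-style branching argument from \cite{alon,Beck91}, applied under the second inequality in (\ref{eq:AlonsConditions}), shows that with probability at least $1-n^{-2}$ the component of any fixed vertex has size $\Oh{\log n}$; a union bound over queried vertices then bounds the total exploration cost by a polylogarithm of $n$ with the required confidence.

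In the second phase I would invoke RESAMPLE independently on each discovered component, working on the sub-hypergraph whose edges are the intersections, with that component, of the hypergraph edges that are not already properly covered by the frozen initial coloring of the complement. The merging rule guarantees that each such intersection contains at least $\alpha k$ vertices, and the first inequality in (\ref{eq:AlonsConditions}) then ensures that the symmetric Local Lemma continues to apply with a constant multiplicative slack. The Moser--Tardos analysis of \cite{moser-tardos} yields a proper completion in expected time linear in the component size, and a standard exponential tail bound combined with a union bound over queries converts this into a polylogarithmic worst-case guarantee with failure probability $\Oh{1/n}$.

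The main obstacle I foresee is ensuring consistency between queries: the color returned for a vertex must depend only on the component structure and on the random tape, not on the order in which queries arrive. I would handle this by deterministically associating with every component a dedicated block of the random tape, indexed by a canonical representative such as the component's smallest-index vertex, which drives the RESAMPLE invocation; two queries hitting the same component then always reproduce the same final coloring. Combining this with the two high-probability bounds above yields a $(\polylog(n),\Oh{n},\Oh{1/n})$-local computation algorithm within the stated range of $\Delta$, which proves the theorem.
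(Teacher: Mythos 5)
Your approach matches the paper's sketch for Theorem~\ref{thm:Alon}: Alon's two-phase decomposition with bad components merged across shared edges, followed by RESAMPLE (in place of exhaustive search) on trimmed sub-hypergraphs, with the Moser--Tardos bound and a repeated-trials tail estimate giving polylogarithmic per-query cost. However, there is a gap in how you match the hypotheses of the theorem. You invoke the second inequality of (\ref{eq:AlonsConditions}), namely $4\eul\Delta^3 < 2^{(1-H(\alpha))k}$, for the Beck-style bound on component sizes, but the theorem assumes only the first inequality, $2\eul(\Delta+1) < 2^{\alpha k}$. The missing step is the observation (made explicitly in the paper and the very reason the threshold sits at $\alpha_A$) that for every $\alpha < \alpha_A$, where $\alpha_A$ is the unique solution of $\alpha = (1-H(\alpha))/3$, and for all sufficiently large $k$, the first inequality implies the second: from $\Delta+1 < 2^{\alpha k}/(2\eul)$ one gets $4\eul\Delta^3 < 2^{3\alpha k}$, and the strict inequality $3\alpha < 1-H(\alpha)$ then yields $2^{3\alpha k} < 2^{(1-H(\alpha))k}$ once $k$ is large. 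This is precisely what fixes $\alpha_A\approx 0.139$, reflects the cubic power of $\Delta$ coming from the Beck-style witness count, and lets the theorem be stated with a single explicit degree condition. As written, your argument proves a more conditional statement that requires both inequalities as hypotheses, and offers no account of where the constant $\alpha_A$ comes from or why that particular value is the right cutoff.
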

\begin{remark}
    The above result can be further improved by
        careful trimming the bad edges to reduce the resampled area to fragments of the bad components.
    In this approach, it is possible to build witness structures that directly correspond to the structures of associated components.
    Then, the second condition of (\ref{eq:AlonsConditions}) can be replaced with
    \[
        4 \eul \Delta^2 < 2^{(1-H(\alpha))k}.
    \]
    That leads to LCA procedure that works up to $\alpha_B\approx 0.170$ in polylogarithmic time per query.
\end{remark}

\subsection{Main result}

The improved coloring procedure, presented in this paper, allows to relax the second inequality in (\ref{eq:AlonsConditions}).
As a result, our algorithm works whenever, the maximum degree satisfies
\[
  2 \eul (\Delta+1) < 2^{\alpha k}, \\
    24 \eul \Delta < 2^{(1-H(\alpha))k}.
\]
Optimizing the value of $\alpha$ we obtain the folowing theorem.

\begin{theorem}[main result]
 \label{thm:main}
    Let $\alStar$ be the unique solution of $\alpha= 1- H(\alpha)$ in $(0,1)$.
    For every $\alpha < \alStar \approx 0.227$ and all large enough $k$,
    there exists a local computation algorithm that,
    in polylogarithmic time per query,
        with probability $1-O(1/n)$ solves the problem of two coloring
        for $k$-uniform hypergraphs with maximum edge degree $\Delta$,
            that satisfies
\[
  2 \eul (\Delta+1) < 2^{\alpha k}.
\]
\end{theorem}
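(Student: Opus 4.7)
The approach extends the two-phase scheme underlying \ref{thm:Alon}, replacing the witness structure used to bound the size of bad components. In the first phase we conceptually sample a uniformly random initial coloring $\sigma_0$, realized in the LCA through an $\Omega(\log n)$-wise independent hash function so that each query reads only polylogarithmically many bits. An edge is $\alpha$-bad if more than $(1-\alpha)k$ of its vertices share a color under $\sigma_0$, and a bad component is a connected component of the conflict graph restricted to bad edges. Outside the bad components the coloring is fixed to $\sigma_0$; inside each bad component we recolor independently. The first inequality $2\eul(\Delta+1) < 2^{\alpha k}$ implies, as in Alon's argument, that the subproblem on each bad component (with the trimming induced by the initial coloring on the intersecting non-bad edges) satisfies the assumptions of the Local Lemma, and therefore admits a proper recoloring.

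The key new ingredient is the component-size bound derived from $24 \eul \Delta < 2^{(1-H(\alpha))k}$. Instead of charging $\Delta^2$ or $\Delta^3$ per edge of a witness, as is done in Alon's analysis and in the refinement noted in the preceding Remark, the plan is to associate each bad component with a canonical rooted ordered spanning tree of its conflict graph, for instance the lexicographically least one under a prescribed ordering of the hyperedges. The number of rooted ordered trees on $s$ nodes is bounded by $4^s$, and specifying an embedding of such a tree into the conflict graph of the input hypergraph, given its root edge, costs at most $\Delta$ choices per tree-edge because the conflict graph has maximum degree $\Delta$. The probability that a particular sequence of $s$ hyperedges is simultaneously $\alpha$-bad under a uniform $\sigma_0$ is at most $\bigl(c \cdot 2^{-(1-H(\alpha))k}\bigr)^s$, with $c$ a constant bounding the Chernoff tail for the majority-color count of a single edge. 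Summing over the at most $n\Delta$ candidate root edges incident to vertices of the hypergraph, the expected number of bad components of size at least $s$ is bounded by $n \Delta \cdot \bigl(4 c \Delta \cdot 2^{-(1-H(\alpha))k}\bigr)^s$. Under the second assumption the base of this geometric expression is strictly below some $q<1$, so taking $s = \Theta(\log n)$ makes the probability $O(1/n)$.

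The LCA itself processes a query at vertex $v$ by running BFS in the bad-edge conflict graph starting from the edges through $v$, reading each candidate edge's colors through the hash function; since $k$ and $\Delta$ depend only on $k$, each BFS step runs in $O(1)$ time, and a component of size $O(\log n)$ is explored in polylogarithmic time. If the explored component exceeds the size threshold the query fails. Otherwise, a RESAMPLE run on the resulting logarithmic-size trimmed subproblem finds a valid local recoloring; seeding that recoloring procedure with a hash of the canonical root of the component ensures that different queries landing in the same component return consistent answers and makes the whole algorithm query-oblivious. A union bound over the whole sequence of queries keeps the total failure probability $O(1/n)$.

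The main technical obstacle is making the witness counting rigorous with a single $\Delta$ per tree-edge rather than $\Delta^2$. The difficulty is that when two bad edges share a vertex, a naive encoding of a tree-edge, which points from the child to any of the $k$ vertices of the parent and then to any of the $\Delta$ other edges through that vertex, costs $k\Delta$ rather than $\Delta$. Avoiding the extra $k$ factor requires working directly in the conflict graph, where each neighbor is itself a hyperedge, together with a canonical spanning tree so that every bad component carries exactly one witness. This is what makes the constant $24\eul$ attainable, as the leading factors of the Catalan bound $4^s$ and of the Chernoff tail combine with the standard $\eul$ inherited from the LLL.
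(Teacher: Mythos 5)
Your plan diverges from the paper's proof in both the algorithm and the probabilistic analysis of the explored region, and the latter has a real gap. You bound the expected number of large bad components by enumerating candidate spanning trees of the bad-edge conflict graph (roughly $(4\Delta)^s$ per root) and multiplying by $\bigl(c\,2^{-(1-H(\alpha))k}\bigr)^s$, asserting that a fixed list of $s$ connected hyperedges is simultaneously $\alpha$-bad with at most this probability. That assertion is false: the events ``$f_i$ is $\alpha$-bad'' for intersecting $f_i$ are positively correlated. For instance, if two edges share $k-1$ vertices, the conditional probability that the second is $\alpha$-bad given that the first one is is bounded below by a constant, exponentially larger than the unconditional tail bound you multiply by. This dependency is exactly why Alon's condition carries a factor $\Delta^3$ and the Remark's refinement a factor $\Delta^2$: both pass to a sparser, genuinely independent family of witnessed events before multiplying probabilities. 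Your discussion of the ``$k\Delta$ versus $\Delta$'' encoding and of a canonical spanning tree addresses only the enumeration side, which was never the bottleneck; having exactly one witness tree per component does not make the $s$ bad-edge events independent, and the product bound does not hold.

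The paper circumvents this with a different algorithm and a correspondingly different witness. Procedure \fun{build\_component} does not simply explore the bad subgraph; it extends a component $\rc$ only by \emph{mono-tails} $f$ (with $f\setminus\rc$ monochromatic and of size $\geq(1-\alpha)k$) or by whole \emph{fresh potential resample structures}, and each extension injects at least $(1-\alpha)k$ vertices that are \emph{new} to $\rc$. The recorded witness tree has three node types: \type{M} and \type{W} nodes carry events about exactly those newly added vertex sets $w(x)$, which are pairwise disjoint by construction, so the associated basic events are mutually independent; \type{J} nodes carry no event and exist only to join subcomponents. Since every subcomponent contains an \type{M} node while every \type{J} node merges at least two subcomponents, $N_M>N_J$, and this amortization yields $\pr{\event{\tau}}\le q^u$ with $q=\max(\pBadTail,\pMonoTail^{1/2})$. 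Paired with the count $m(6\eul\Delta)^u$ of labeled, oriented witness trees, this gives $6\eul\Delta q<1/2$, which in the relevant range becomes $24\eul\Delta<2^{(1-H(\alpha))k}$. That trade between the stronger monochromatic events at \type{M} nodes and the eventless \type{J} nodes is precisely what drops the exponent on $\Delta$ to one; nothing in your outline plays that role. The auxiliary ingredients you mention (bounded-independence hashing, canonical roots for query obliviousness) are sensible but peripheral, and without a correct independence argument the component-size bound and hence the theorem are not established.
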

Within the notation of \cite{fast-local} we present $(\polylog(n), \Oh{n}, \Oh{1/n})$-local computation algorithm
    that properly colors hypergraphs satisfying the above assumption.

\subsection{Conservative \texttt{RESAMPLE}}

Since it seems hard to adapt the general procedure RESAMPLE to the LCA model, we propose a variant which,
    by incorporating the idea of edge trimming, allows for balancing the strength of the assumptions
        with the speed of expansion of the recolored fragments.
During the evaluation of RESAMPLE, a vertex is called \emph{fresh} if its color has never been resampled before.
We focus on the components of resampled vertices.
When an edge is picked to be recolored, it may contain some fresh vertices.
At that point, we can decide whether to resample the whole edge and extend the component,
    or trim the edge to the part of already resampled vertices.

\textbf{Conservative RESAMPLE:}
The procedure is parametrized by $\alpha \in (0,1]$ which controls how eager it is to extend the components of resampled vertices.
\emph{Conservative RESAMPLE} proceeds as RESAMPLE with the change that, whenever it observes a monochromatic edge $f$,
    it resamples all the vertices of $f$ only if more than $(1-\alpha)k$ of its vertices are fresh.
Otherwise it resamples only the vertices of $f$ that have been resampled before.

For $\alpha=1$, the procedure works just like the original one.
When $\alpha<1$ it is weaker, however, by standard arguments,
    one can show that it works well on instances satisfying strengthened assumptions of Local Lemma (\ref{eq:parametrized-LLL-condition}) for the same $\alpha$.
Note that, when a whole edge is resampled, the set of its fresh vertices has to be monochromatic in the initial coloring.
This explains the following observation, which in particular implies that,
    before the procedure execution starts, we can determine some bounds that will not be crossed by the algorithm.
\begin{observation}
    Conservative RESAMPLE run with parameter $\alpha$, resamples only vertices of $\alpha$-bad-edges.
\end{observation}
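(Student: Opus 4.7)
My plan is to prove the observation by induction on the steps of the procedure, distinguishing the two cases in the conservative resampling rule.

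First I would set up the invariant: at any point during the execution, every vertex that has been resampled so far belongs to some $\alpha$-bad edge (i.e.\ an edge that contains more than $(1-\alpha)k$ vertices of a single color in the initial coloring). The base case is trivial since at the start no vertex has been resampled.

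For the inductive step, I would consider the edge $f$ selected at the current step. If conservative RESAMPLE chooses the partial branch, it only resamples vertices of $f$ that have already been resampled, and those vertices satisfy the invariant by the inductive hypothesis; no new vertex enters the set of resampled vertices in this case. The interesting case is the full branch, where more than $(1-\alpha)k$ vertices of $f$ are fresh. Here I would use the key property that a fresh vertex has never had its color changed, so its current color coincides with its initial color. Since the procedure only acts on monochromatic edges, the more than $(1-\alpha)k$ fresh vertices of $f$ all share the same color in the current coloring, and therefore in the initial coloring as well. Consequently $f$ itself contains more than $(1-\alpha)k$ initially monochromatic vertices, so $f$ is $\alpha$-bad by definition. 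Every vertex resampled in this step belongs to $f$, preserving the invariant.

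The only subtle point is making sure the argument handles both kinds of vertices of $f$ in the full branch correctly: the fresh ones immediately become witnesses to $f$ being $\alpha$-bad, while the non-fresh ones of $f$ were already in the resampled set and are now re-resampled as part of an $\alpha$-bad edge, so the invariant still holds for them. I do not expect any real obstacle here; the argument is essentially a bookkeeping observation once one notices that freshness is exactly what links the current monochromatic state of $f$ back to its initial coloring.
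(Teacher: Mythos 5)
Your proof is correct and takes essentially the same approach as the paper, which only gestures at the argument with the remark that when a whole edge is resampled, its fresh vertices are monochromatic in the initial coloring. Your induction simply makes this explicit, with the key step being precisely the identification that freshness preserves the initial color, so the monochromatic fresh part of a fully-resampled edge witnesses that the edge is $\alpha$-bad.
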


This observation links conservative RESAMPLE to the Alons approach.
Observe that now, the component expansions take place only inside bad components.
Moreover, disjoint components can be merged only via edges which intersects more than one component (and these edges are trimmed after the merge).
Altogether is means that, under assumptions of Theorem \ref{thm:Alon}, conservative RESAMPLE can be evaluated as LCA.

\section{Algorithm}
\label{sec:Algorithm}

We draw our inspiration from the behavior of the conservative RESAMPLE.
Our algorithm is parameterized by $\alpha \in (0,1)$ that serves similar purpose as the analogous parameter in that procedure.
We fix this value for the whole section.

Examining procedure conservative RESAMPLE we can observe that, whenever the set of resampled vertices is extended by resampling a red edge $f$,
then not only $f$ has to be initially bad, but also all its initially blue vertices have to be already contained in the set of resampled vertices.
In particular they are covered by initially bad edges.
Clearly, for every initially blue vertex $v$ of $f$, there exists an edge that caused the first resampling of $v$.
That edge extended set of resampled vertices, therefore we can assign to it a set of vertices of size $(1-\alpha)k$ that is initially monochromatic.
Altogether we obtain a structure where with the base edge $f$ we can associate a sequence of bad edges that witness the fact that $f$ became monochromatic.
We are going to make it the basic block of our construction.

\begin{definition}\label{lca-structure}
    A pair $(f, (h_1, .., h_w))$ of edge $f$ (called \emph{the base edge}) and sequence $(h_1, .., h_w)$ of its neighboring edges (called \emph{witnessing edges}) is called \emph{potential resample structure},
    iff
    \begin{enumerate}
        \item $f - \bigcup\{h_1,\ldots, h_w\}$ is initially monochromatic and has at least $(1-\alpha)k$ vertices,
        and
    \item for every $j\in [w]$, set $h_j - \bigcup\{h_1, \ldots, h_{j-1}\}$ has at least $(1-\alpha)k$ vertices in one color.
    \end{enumerate}
    In particular, for initially monochromatic edge $f$, pair $(f, \emptyseq)$ is also a potential resample structure
    (where $\emptyseq$ denotes the empty sequence).
\end{definition}

Potential resample structure will be called \emph{structure} for short.
For a structure $S$, we denote by $V(S)$ the set of vertices belonging to the edges of $S$.

Just like in the variant of Alons approach described in Section \ref{sec:Alon_and_main}, to answer a query,
    our procedure tries to build a component which is then passed to a special version of RESAMPLE.
The components are represented by subsets of vertices of the input hypergraph $H$.
This hypergraph is considered fixed for the whole computation.
For an edge $f$ we use $N(f)$ to denote the set of edges of $H$ which intersect $f$ (without $f$ itself).
An edge $f$ is a \emph{mono-tail} of a component $\rc$ if $f \setminus \rc$ is monochromatic and has at least $(1-\alpha)k$ vertices.
A structure is \emph{fresh} if the set of its vertices is disjoint from all the components constructed so far.
Structure $S$ is in distance $1$ from component $\rc$ if $V(S)$ is disjoint from $\rc$,
        but there exists an edge that intersects both $\rc$ and $V(S)$.
Note that we use the usual distance between sets of vertices of the graph
    (which is not the same as the distance between edges in the line graph of $H$ which is commonly used in other papers on the subject).
Set $V_1(\rc)$ is defined as the set of vertices of distance at most 1 from $\rc$ (i.e. the set of vertices which belong to some edge that intersects $\rc$).
Finally, we define a hypergraph $\Hrc$ as having the following sets of vertices and edges:
\[
    V(\Hrc)  = \rc, \\
    E(\Hrc)  = \{f \cap \rc: f\in E(H) \text{  s.t. } |f\cap \rc| \geq \alpha k \}.
\]

Our procedures use two data structures that are preserved between calls.
The first contains the set of vertices that has been marked as \state{colored}.
The second, called \state{current coloring}, represents the partial coloring that has been constructed so far.
When the whole computation starts, both these structures are empty.

Inspecting the initial color of a vertex is in fact an implicit reference to the random source
    in which an exclusive random bit representing the initial color is assigned to every vertex.
Similarly, testing whether given vertex belongs to a fresh structure, encapsulates references to initial colorings of some vertices and the set of marked vertices.

\SetKwFunction{FQuery}{\fun{query}}
\SetKwFunction{BuildComponent}{\fun{build\_component}}
\SetKwFunction{ColorComponent}{\fun{color\_component}}

\begin{algorithm}
  \label{alg-query}
  \DontPrintSemicolon
  \SetKwProg{Fn}{Procedure}{:}{}
  \Fn{\FQuery{$v$ - vertex}}{
        \If {$v$ is not marked as \state{colored}}  {
            \eIf{there exists a fresh structure $S$ which contains $v$}{
                $\rc \gets $ \BuildComponent{$S$}            \;    
                $\rcol \gets $ \ColorComponent{$\rc$}        \;    
                update \state{current coloring} with $\rcol$ \;
                update \state{current coloring} by assigning initial colors to 
                $V_1(\rc)\setminus \rc$  \;
                mark vertices of $V_1(\rc)$ as \state{colored}
            } {
            update \state{current coloring} by assigning to $v$ its initial color \;
            mark $v$ as \state{colored}
            }
        }
        \KwRet current color of $v$\;
  }
  \caption{Uniform Hypergraph Local Coloring - main function}{}
\end{algorithm}

\begin{algorithm}
  \label{alg-component}
  \DontPrintSemicolon
  \SetKwProg{Fn}{Procedure}{:}{}
  \Fn{\BuildComponent{$S = (f, \{h_1, .., h_w\})$ - structure}}{
        $\rc \gets \emptyset$\;
        add vertices of edges $h_1, ..., h_w, f$ to $\rc$\;
        \Repeat{component $\rc$ is \state{finished}}{
            \If{$|\rc|> \vCompBound$}{
                \FAIL \tcp*{$\rc$ is too large}
            }
            \ElseIf{there is an edge $f$ which is a mono-tail of $\rc$}{
                add vertices of $f$ to $\rc$
            }
            \ElseIf{there is a fresh structure $S'$ in distance 1 from $\rc$}{
                    add vertices of edges of $S'$ to $\rc$
            }
            \Else{$\rc$ is \state{finished}}
        }
        \KwRet $\rc$\;
    }
  \caption{Expanding component}
\end{algorithm}

\begin{algorithm}
  \label{alg-color}
  \DontPrintSemicolon
  \SetKwProg{Fn}{Procedure}{:}{}
    \Fn{\ColorComponent{$\rc$ - component}}{
        $G \gets \Hrc$ \;
        $E_t \gets |E(G)|/\Delta$ \;
        \Repeat{$2 \log(m)$ trials failed} {
            run RESAMPLE on $G$ for $2E_t$ steps \;
            \lIf{some proper coloring $\rcol$ has been found}{\KwRet $\rcol$}
        }
        \FAIL
    }
  \caption{Finding coloring inside component}
\end{algorithm}

\textbf{\fun{query}}
When a vertex is queried, either it has been inspected before and its color is already fixed, or it is fresh.
In the second case, if it does not belong to any fresh structure, we mark it as \state{colored} and return its initial color.
If it belongs to some fresh structure, we build a component on this structure, by calling \fun{build\_component}.
When the component is successfully built we try to find a proper coloring for it with procedure \fun{color\_component}.
If no failure has occurred so far, we update \state{current coloring} with the obtained component coloring
and mark all vertices of the component and its closest neighborhood as \state{colored}
(assigning initial colors for the latter ones).

\textbf{\fun{build\_component}}
New component is initialized with the vertices $V(S)$ of the given structure $S$.
Then, the component is expanded according to the rules that try to isolate a fragment of the hypergraph that can be recolored independently of other such fragments.
To achieve this, our algorithm expands the component in two ways.
First, it adds mono-tails as long as it can find some.
Then it searches for a fresh structure that is close to (but disjoint with) the component constructed so far,
    and add all the vertices of that structure to the component.
These two steps are repeated as long as the component grows.
However, if at some point, the size of the component exceeds the prescribed bound, the procedure gives up and declares failure.
In such a case the whole computation fails as well.
The important property is that, under assumptions of Theorem \ref{thm:main},  large components are unlikely to be built.
If the component is completed without exceeding this bound, it is returned for further processing.

\textbf{\fun{color\_component}}
When component $\rc$ is successfully determined, we attempt to color the trimmed hypergraph $\Hrc$ built on this component.
We use procedure RESAMPLE that is run for the number of steps (resamples of edges) determined by the number of edges (possible bad events) in $\Hrc$.
If a proper coloring is found, it is returned from the procedure.
If not, we run RESAMPLE again (starting from a new random coloring),
    allowing $2\log(m)$ of such trials.
When none of them is successful,
the procedure declares failure, which again invalidates the whole computation.
We will argue that this event is unlikely to happen as well.
We rely here on a very weak (Markov type) bounds for the probability that RESAMPLE does not succeed in a long run.
It is clear that using more precise estimates, would allow to derive procedures with better running times.
Since that change would not make any substantial difference for the overall result,
    we stick to the most simple techniques.

\section{Proof}

We show that our algorithm fulfills the conditions of LCA for the problem of two coloring of uniform hypergraphs.
A sequence of calls to \fun{query} caused by a sequence of queries is called \emph{a computation}.
Computation is \emph{successful} if no failures occur during its evaluation.
It is \emph{complete} if additionally every vertex of the hypergraph is queried.
First, we prove that, for every successful computation, the returned coloring can be extended to a proper one
and
that the colors returned for multiple queries of the same vertex are consistent with previous answers.
Then, we argue that failures occur rarely.
Finally, we estimate running time of a single call to \fun{query} and the amount of memory needed to carry out the whole computation.

\subsection{Correctness}
\label{ss:correctness}

When answering a query, our algorithm sometimes assigns colors not only to the queried vertex but also to a larger set of vertices (those eventually marked as \state{colored}).
It is an important property of the procedure, that such colors are not changed when responding to subsequent queries.
As a consequence the answers for vertices queried multiple times are consistent.
Observe that new colors are assigned to vertices only in \fun{color\_component},
    thus only vertices added to components can be recolored.
Hence, it remains to show that once a vertex is marked \state{colored}, it cannot be added to a component constructed during subsequent queries.

Vertex becomes \state{colored} either after being included in a new component,
    or when it becomes a close neighbor of just constructed component,
    or when (at the time of being queried) it lies outside of any fresh structure.
According to Proposition \ref{prop:comp-disjoint} proved in the next section, the components constructed during the computation are of distance at least $2$.
In particular they are disjoint.
Moreover for any component $\rc$ no vertex $v$ from the neighborhood of $\rc$ can belong to some other component.
That implies that, unless $v\in \rc$ it is not going to be recolored.
By Proposition \ref{prop:structureCover} (also proved in the next section) every component can be covered by structures.
Since the components are disjoint, these structures have to be fresh at the time when the component is built.
Therefore, a vertex which, at the time of being queried, does not belong to any fresh structure,
    cannot be included in a component constructed in subsequent queries.

\subsubsection{Independence of components}

The procedure of component building has been deliberately defined in a way to satisfy the following property.
\begin{proposition}
    \label{prop:structureCover}
    In each step of evaluation of \fun{build\_component}, the component built so far can be covered by structures.
\end{proposition}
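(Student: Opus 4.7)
The plan is to prove, by induction on the iterations of the \texttt{repeat} loop in \fun{build\_component}, a slightly stronger invariant: at every step, there exists a single potential resample structure $S$ with $V(S) = \rc$ that covers $\rc$. The base case is immediate, since $\rc$ is initialized as the union of the edges of the input structure.

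For the inductive step, assume $\rc$ is covered by a structure $S = (f_t, (h_1, \ldots, h_w))$ with $V(S) = \rc$, and consider the two branches that expand $\rc$.

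In the mono-tail branch, let $f$ be an edge such that $f \setminus \rc$ is monochromatic with at least $(1-\alpha)k$ vertices. I would construct a new structure $S' := (f, (h_1, \ldots, h_w, f_t))$, promoting the old base edge to the last witness position. The witness condition for each $h_j$ is inherited verbatim from $S$; the condition at position $w+1$ asks that $f_t - \bigcup_i h_i$ be monochromatic with at least $(1-\alpha)k$ vertices, which is exactly the base condition of $S$; the base condition for $f$ reduces to $f - V(S) = f \setminus \rc$, satisfied by the mono-tail hypothesis. Clearly $V(S') = V(S)\cup f = \rc \cup f$, matching the updated component.

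In the fresh-structure branch, let $S'' = (g, (g_1, \ldots, g_v))$ be a fresh structure at distance $1$ from $\rc$, so $V(S'') \cap \rc = \emptyset$. I would form $S^* := (g, (g_1, \ldots, g_v, h_1, \ldots, h_w, f_t))$, concatenating the witness sequences. The $g_i$ conditions are inherited from $S''$. For each $h_j$ at its new position, the required residual is $h_j - (\bigcup_i g_i \cup \bigcup_{i<j} h_i)$; since $h_j \subseteq \rc$ is disjoint from $V(S'')$, the $g_i$ terms contribute nothing and the condition reduces to the one already guaranteed by $S$. The same disjointness handles $f_t$ and also the base condition for $g$, which collapses to $g - V(S'')$, guaranteed by $S''$. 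Since $V(S^*) = V(S) \cup V(S'') = \rc \cup V(S'')$, the invariant is preserved.

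The delicate step is the fresh-structure case, where two independent witness sequences are spliced together: the argument works only because the vertex-disjointness inherent in the "distance $1$" requirement makes all cross-terms between the $g$-edges and the $h$-edges vanish, so neither sequence's monochromatic-residual condition is disturbed by the merge.
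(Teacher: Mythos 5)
Your plan strengthens the invariant to a \emph{single} structure $S$ with $V(S)=\rc$, while the paper only maintains a cover by a growing family of structures (one new structure added per extension step). This is a genuinely different route, but it runs into a gap: Definition~\ref{lca-structure} requires the witnessing edges $h_1,\ldots,h_w$ to be \emph{neighboring edges} of the base edge, i.e.\ each witness must intersect the base. Your constructions do not respect this. In the fresh-structure branch it fails outright: since the fresh structure $S''$ is at distance $1$ from $\rc$, we have $V(S'')\cap\rc=\emptyset$, and $g\subseteq V(S'')$ while $h_1,\ldots,h_w,f_t\subseteq\rc$, so none of the appended edges can intersect $g$. Hence $S^*=(g,(g_1,\ldots,g_v,h_1,\ldots,h_w,f_t))$ is not a potential resample structure at all. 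The mono-tail branch has the same flaw in milder form, since some $h_i$ or $f_t$ may fail to intersect the new base $f$.

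This adjacency requirement is not cosmetic: it is what keeps a structure a bounded-diameter object centered on its base edge, which is what lets \fun{query} and \fun{build\_component} search for fresh structures in constant time, and is also what the proof of Proposition~\ref{prop:comp-disjoint} leans on when it argues that a fresh structure $S_v$ at distance $1$ from $\rc'$ would have been found and used to extend $\rc'$. The paper's proof sidesteps the whole issue: when $f$ is added as a mono-tail it adds to the cover only the structure $(f,\,N(f)\cap E(\rc))$, whose witness list is restricted to edges of $E(\rc)$ that actually touch $f$ (ordered by the time they were used to extend $\rc$), and checks that the ``new vertices'' each such edge contributed to $\rc$ provide the $(1-\alpha)k$-in-one-color guarantee. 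The remaining vertices of $\rc$ stay covered by the structures already accumulated in the cover. That is precisely why the proposition asserts coverage by a collection of structures rather than identity with a single one, and your attempt to tighten this is what creates the conflict with the definition.
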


\begin{proof}
    The property is satisfied at the beginning, since the procedure is started from a structure.
    Expanding via fresh structures is an obvious case.
    The remaining one is when the component is extended with a  mono-tail.
    When adding mono-tail $f$ to component $\rc$,
    let $E(\rc)$ be the set of edges used to extend component $\rc$ so far.
    We show that $(f, N(f) \cap E(\rc))$ can be arranged to form a structure if we appropriately order the witnessing edges.
    As $f$ was added to $\rc$ as mono-tail, we have that $f \setminus \rc$ is monochromatic and large enough.
    Every edge of $g\in N(f) \cap E(\rc)$ added at least $(1-\alpha)k$ new vertices to the constructed component.
    By design, these sets of vertices were either monochromatic (if $g$ was added as mono-tail) or belonged to a fresh structure.
    In both of these cases the set of the new vertices added to the component by $g$ contains at least $(1-\alpha)k$ vertices of one color.
    By definition these sets are disjoint for distinct edges of $N(f) \cap E(\rc)$.
    That shows that the edges from $N(f) \cap E(\rc)$, ordered according to the times they were used to expand $\rc$, satisfy the requirements of the sequence of witnessing edges.
\end{proof}

We can now prove the following proposition which in particular implies that the components can be recolored independently,

\begin{proposition} \label{prop:comp-disjoint}
    Any two distinct components constructed during a computation are in the distance at least 2.
\end{proposition}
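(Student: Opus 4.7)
The plan is to fix two components $\rc_1$ and $\rc_2$ built during the computation, with $\rc_1$ constructed first, and to prove by induction on the expansion steps of \fun{build\_component} that the partial set $\rc$ maintained while building $\rc_2$ is always at distance at least $2$ from $\rc_1$ (i.e.\ disjoint from $\rc_1$ and sharing no edge with it). Throughout, the argument will repeatedly invoke the two conditions that caused $\rc_1$ to be declared \state{finished}: no edge is a mono-tail of $\rc_1$, and no fresh structure sits at distance $1$ from $\rc_1$. The key observation that makes these conditions available later in time is that freshness is monotone: it only demands disjointness from already constructed components, so any structure fresh at the moment $\rc_2$ is being built was already fresh when $\rc_1$ was declared finished.

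The base case and the step that extends $\rc$ by a fresh structure $S'$ at distance $1$ from the current $\rc$ are handled by essentially the same argument. In both, the fresh structure in question is disjoint from $\rc_1$ by freshness, and if any edge intersected both the structure and $\rc_1$, then, by the monotonicity above, this would have been a fresh structure at distance $1$ from $\rc_1$ already at the moment $\rc_1$ was finished, contradicting the finishing condition.

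The main obstacle is the step that extends $\rc$ by a mono-tail $f$, whose newly added vertices are not pre-packaged inside a fresh structure. Since $f$ meets the current $\rc$ and the inductive hypothesis forbids any edge from meeting both $\rc$ and $\rc_1$, immediately $f \cap \rc_1 = \emptyset$; the remaining danger is an edge $g$ meeting both $f$ and $\rc_1$. To exclude this, I would reuse the construction from the proof of Proposition~\ref{prop:structureCover}: the edges of $N(f) \cap E(\rc)$, ordered by the times they were used to extend $\rc$, together with $f$ as base, form a potential resample structure $T_f$. Because each $h_j \in N(f) \cap E(\rc)$ is fully contained in $\rc$, the vertex set $V(T_f)$ sits inside $\rc \cup f$; running the induction simultaneously for all components built before $\rc_2$ makes $V(T_f)$ disjoint from each of them, so $T_f$ is a fresh structure at the moment $\rc_1$ was finished. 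The finishing condition then forbids any edge from intersecting both $V(T_f) \supseteq f$ and $\rc_1$, ruling out $g$. Together with the inductive hypothesis, $\rc \cup f$ remains at distance at least $2$ from $\rc_1$, closing the induction.
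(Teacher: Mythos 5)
Your proof is correct and follows essentially the same approach as the paper's: both rely on the monotonicity of freshness and on Proposition~\ref{prop:structureCover} (you re-derive its construction in the mono-tail case) to exhibit a structure inside the new partial component that would have been a fresh structure at distance~$1$ from $\rc_1$ when $\rc_1$ was finished, contradicting the finishing rule of \fun{build\_component}. The only cosmetic difference is that you phrase it as an explicit step-by-step induction maintaining the distance-$\geq 2$ invariant, whereas the paper phrases the same argument as a contradiction at the first moment the invariant would fail, invoking Proposition~\ref{prop:structureCover} uniformly rather than by cases.
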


\begin{proof}
    Every new component is started from a fresh structure.
    That implies that at that moment it is disjoint from all the components constructed so far.
    Suppose that at some point (possibly just after adding the first fresh structure),
        currently constructed component $\rc$ becomes of distance 1 from some previously built component $\rc'$.
    Let $v\in \rc$ be a vertex of distance 1 from $\rc'$ (it means that there exists an edge that contains $v$ and intersects $\rc'$).
    By the previous proposition there exists a structure $S_v$ for which set $V(S_v)$ contains $v$ and is a subset of $\rc$.
    Clearly, $S_v$ is of distance 1 from $\rc'$, and at a time when the construction of $\rc'$ was finished, $S_v$ was fresh.
    That contradicts the extension rules of procedure \fun{build\_component}, which would use $S_v$ to extend $\rc'$.
\end{proof}

\subsubsection{Proper coloring}
\label{ss:properColoring}

We want to show that, for any successful computation, the obtained coloring can be completed to a proper one.
By the independence of coloring of the components,
    we can focus on the formulation that any complete computation returns a proper coloring of the hypergraph.
From Proposition \ref{prop:comp-disjoint} we know that each edge intersects at most one component.
Hence, we can partition the edges of $H$ into three families:
\begin{itemize}
    \item $E_1$ -- edges disjoint with components.
    \item $E_2$ -- edges with small intersection with some component (i.e. of size less than $\alpha k$),
    \item $E_3$ -- edges with large intersection with some component (i.e. of size at least $\alpha k$),
\end{itemize}
For each of the families we explain why these edges cannot be monochromatic in the final coloring produced by the computation.

The edges of the first family are entirely colored by the initial coloring.
Hence, if some edge of $E_1$ is monochromatic in the final coloring, then it is also initially monochromatic.
Such a monochromatic edge itself forms a fresh structure, so it would be either added to some neighboring component or a new component would be built on it.
Both this cases contradict the definition of $E_1$.

Every edge of $E_2$, has more than $(1-\alpha)k$ vertices outside of the component it intersects.
As we already observed, for any such edge $f$, the set of its vertices $f'$  that do not belong to any of the constructed components are colored by the initial coloring.
Since the component that intersects $f$ has not been extended by $f$, 
    this edge is not mono-tail, thus set $f'$ cannot be monochromatic in the initial coloring.
Therefore, $f$ is not monochromatic in the final coloring as well.

A proper coloring for the edges of $E_3$ is ensured by procedure \fun{color\_component},
since these edges were considered (possibly in a trimmed form) in some $\Hrc$.

\subsection{Probability of failure}
\label{ss:failure}

For each query there are two places in the algorithm in which it can fail:
\begin{enumerate}
    \item inside \fun{build\_component} when the constructed component becomes too large,
    \item during \fun{color\_component} when each of several tries of RESAMPLE ends without finding a proper coloring.
\end{enumerate}
We bound the probability of failing in the following propositions.

\begin{proposition} \label{main-prop}
    Under the assumptions of Theorem \ref{thm:main}, for any sequence of queries,
        the probability that the algorithm
        starting from uniformly random initial coloring,
            builds a  component of size at least $\vCompBound$ is smaller than $\frac{2}{m}$.
\end{proposition}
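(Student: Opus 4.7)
The plan is a Beck--Alon style witness-tree argument, adapted to the expansion rules of \fun{build\_component}. Suppose that for some query sequence the algorithm builds a component $\rc$ with $|\rc| \geq 2k\log(m)$. I will encode this event as the existence of a combinatorial ``witness'' whose total probability is at most $2/m$ under the uniformly random initial coloring.

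The witness will be extracted from the sequence of edges $e_1, \ldots, e_N$ added during the offending call to \fun{build\_component} (edges of the starting structure first, then mono-tails and edges of subsequent fresh structures), labeled by role and decorated with one auxiliary ``linking'' edge per fresh-structure attachment. Since every added edge contributes at most $k$ new vertices, $N \geq 2\log(m)$. By the inductive argument in the proof of Proposition~\ref{prop:structureCover}, for each $e_i$ one identifies a set $B_{e_i} \subseteq e_i \setminus \bigcup_{j<i} e_j$ of size at least $(1-\alpha)k$ on which the initial coloring is constrained: $B_{e_i}$ must be monochromatic when $e_i$ is a base of a structure or a mono-tail, and must contain at least $(1-\alpha)k$ vertices of one color when $e_i$ is a witness edge. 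The arithmetic behind condition~(1) of Definition~\ref{lca-structure} together with the vertex-disjointness of fresh structures from $\rc$ guarantees that the $B_{e_i}$ are pairwise disjoint across all extension events, so the associated coloring constraints are mutually independent.

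The labeled sequence then encodes a rooted tree in the line graph of $H$, which has maximum degree at most $\Delta$; standard tree counting yields at most $(c\Delta)^N$ such labeled trees of size $N$ rooted at a specified edge, with $c$ an absolute constant absorbing the role labels and the $O(1)$ per-node cost for linking edges. By the independence above, the probability that all constraints hold is $\prod_i \Pr[\text{constraint on } B_{e_i}]$; via the binomial tail bound the per-edge factor is dominated by the witness case $c' k \cdot 2^{-(1-H(\alpha))k}$ (the base/mono-tail case gives the smaller $2 \cdot 2^{-(1-\alpha)k}$, since $H(\alpha) > \alpha$ for $\alpha \in (0,1/2)$). Summing over the $m$ choices of root and over $N \geq 2\log(m)$,
\[
\Pr\bigl[\text{some constructed component reaches size} \geq 2k\log(m)\bigr] \leq m \sum_{N \geq 2\log(m)} \bigl(c'' k\Delta \cdot 2^{-(1-H(\alpha))k}\bigr)^N.
\]

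The assumption $2e(\Delta+1) < 2^{\alpha k}$ with $\alpha < \alStar$ and $k$ sufficiently large implies, via the direct computation carried out right before the statement of Theorem~\ref{thm:main}, that $24e\Delta < 2^{(1-H(\alpha))k}$ (in fact with any prescribed exponential slack); hence for $k$ large enough the per-edge factor is at most $1/2$, the geometric sum is bounded by $2\cdot m^{-2}$, and multiplication by $m$ yields the claimed $2/m$. The main obstacle will be pinning down the witness definition precisely: simultaneously handling the three kinds of extension events (bases, witnesses, mono-tails), tracking the auxiliary linking edges that arise when a fresh structure attaches to $\rc$ via an edge outside both $\rc$ and the structure, and verifying the global pairwise disjointness of the $B_{e_i}$ in spite of the heterogeneous addition rules; these bookkeeping issues are what drive the comparatively large constant $24$ in the derived inequality, while the probability and tree-counting estimates themselves are routine.
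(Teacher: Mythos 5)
Your approach is in the same spirit as the paper's (witness trees in $L(H)$, independent coloring constraints on pairwise-disjoint vertex sets, tree counting), and the identification of the per-node constraint sets $B_{e_i}$ is essentially the paper's Claim~\ref{cl-preserve-dependency}. However, there is a genuine gap in the bookkeeping for the linking edges, and it is exactly the place where the paper's argument does something non-trivial.

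The linking edges cannot be absorbed as ``$O(1)$ per-node cost''. A fresh structure attaches to the component at distance~$1$, i.e.\ through a connector edge of $H$ that lies in neither $\rc$ nor the structure, and this connector has to be recorded as a full node in any subtree of $L(H)$ (the paper's type-\type{J} nodes). Specifying such a node costs a factor of roughly $\Delta$, not $O(1)$; equivalently, if you encode a tree edge as a $2$-hop in $L(H)$ through the connector, the relevant degree is $\Delta^2$. Moreover, there can be $\Theta(N)$ of them: each small fresh structure contributes one. So the object you are counting has $u = N + N_J$ nodes with $N_J$ up to $N$, and the count is $(c\Delta)^{N+N_J}$, while only $N$ of those nodes carry a coloring constraint. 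Putting these together, your product is of the form $(c\Delta)^{N+N_J}\,(P_W)^N$, and since $N_J$ can be comparable to $N$, the per-step factor effectively becomes $\Delta^2\,2^{-(1-H(\alpha))k}$ up to polynomial corrections. That inequality is what the paper's Remark after Theorem~\ref{thm:Alon} attributes to the improved Alon approach, and it only reaches $\alpha_B \approx 0.170$, not $\alStar \approx 0.227$. Your claimed single-$\Delta$ summand $(c''k\Delta\,2^{-(1-H(\alpha))k})^N$ is therefore not justified by the argument as written.

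What closes this gap in the paper is the amortization in Corollary~\ref{cor:event-probability}: a structural count shows $N_M > N_J$ (each \type{J}-node merges subcomponents, each of which contains at least one \type{M}-node), and then $P_M^{N_M} = (P_M^{1/2})^{2N_M} \leq (P_M^{1/2})^{N_M+N_J}$ is used to ``pay'' for the probability-free \type{J}-nodes, yielding $\Pr[\event{\tau}] \leq q^u$ with $q = \max(P_W, P_M^{1/2})$ and $u$ the \emph{total} tree size. This is what makes the per-node factor $6\eul\Delta q$ (one power of $\Delta$) and pushes $\alpha$ up to $\alStar$. To repair your proof you would need to (i) treat connector edges as honest tree nodes, (ii) prove the inequality $N_J < N_M$ (or an equivalent charging scheme), and (iii) split $P_M$ as $(P_M^{1/2})^2$ so that $M$-nodes subsidize $J$-nodes, checking that $P_M^{1/2}\le P_W$ for $\alpha$ in the relevant range (this holds for $\alpha>\alpha_0\approx 0.133$, with smaller $\alpha$ handled by running the algorithm with a larger parameter). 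The probability and tree-counting estimates in your proposal are otherwise fine, and the reduction of the claim to the existence of a ``heavy'' witness tree is exactly the paper's Proposition~\ref{prop:proper-event-holds}.
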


The proof of this proposition is moved to Appendix \ref{sec:compSize}.

\begin{proposition} \label{coloring-prop}
    Under the assumptions of Theorem \ref{thm:main}, for any sequence of queries,
        the probability that the algorithm
        starting from uniformly random initial coloring,
            fails in a call to \fun{color\_component} is less than $\frac{1}{m}$.
\end{proposition}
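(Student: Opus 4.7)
The plan is to bound the failure probability of a single call to \fun{color\_component} and then take a union bound over all such calls occurring in the computation. Since the components built in distinct calls are pairwise vertex-disjoint by Proposition \ref{prop:comp-disjoint}, and each of them contains at least the $k$ vertices of its seed structure's base edge, the number of calls never exceeds $n/k$. Because the hypergraph has no isolated vertices, every vertex lies in some edge, so $n \le km$ and thus $n/k \le m$. Hence it suffices to bound the per-call failure probability by roughly $1/m^2$.

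Fix any component $\rc$ produced by \fun{build\_component}. The hypergraph $\Hrc$ has, by definition, all edges of size at least $\alpha k$, and its maximum edge degree is still bounded by $\Delta$, since trimming edges can only shrink them and therefore cannot inflate edge degrees. Consequently, under a uniformly random two-coloring of $\rc$, each edge of $\Hrc$ is monochromatic with probability at most $2^{1-\alpha k}$. With the symmetric weights $x_A = 1/(\Delta+1)$ and the standard estimate $(1-1/(\Delta+1))^{\Delta} > 1/\eul$, the strict assumption $2\eul(\Delta+1) < 2^{\alpha k}$ of Theorem \ref{thm:main} translates into the Lov\'{a}sz Local Lemma condition $\Pr[A] \le x_A(1-x_A)^{\Delta}$ for each bad event $A$ (``edge monochromatic''). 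The Moser-Tardos theorem then bounds the expected total number of resamplings by $\sum_A x_A/(1-x_A) = |E(\Hrc)|/\Delta = E_t$, so Markov's inequality implies that a single trial of RESAMPLE restricted to $2E_t$ steps terminates with a proper coloring with probability at least $1/2$.

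Because the $2\log(m)$ trials draw independent fresh random bits (for both the starting coloring and the internal edge picks inside RESAMPLE), the probability that every trial fails is at most $2^{-2\log(m)} = 1/m^2$. Summing the failure probabilities over the at most $m$ calls to \fun{color\_component} yields a total failure bound of $1/m$; the strict inequality required by the proposition is obtained from the slack present in the assumption $2\eul(\Delta+1)<2^{\alpha k}$, which tightens both the LLL condition and the Markov step by a constant factor (equivalently, one can absorb a constant into the number of trials). The main, though modest, technical point is confirming that the LLL-and-Moser-Tardos machinery applies uniformly across every component $\rc$ the algorithm might construct; this follows immediately from the uniform lower bound on edge size and upper bound on edge degree of $\Hrc$, which are structural features of the trimmed hypergraph and are insensitive to the component's particular shape.
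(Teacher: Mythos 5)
Your proof is correct and follows essentially the same route as the paper: verify the symmetric LLL hypothesis for $\Hrc$ using the edge-size lower bound $\alpha k$ and degree bound $\Delta$, invoke the Moser--Tardos expectation $|E(\Hrc)|/\Delta$, apply Markov to get a per-trial failure probability of $1/2$, amplify over $2\log m$ trials to $m^{-2}$, and union-bound over at most $m$ calls. You add a few worthwhile details the paper leaves implicit (the explicit $x_A = 1/(\Delta+1)$ computation, the $n \le km$ argument bounding the number of calls, and the remark on where the strict inequality comes from), but the argument is the same.
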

\begin{proof}
Consider procedure \fun{color\_component} working on a component $\rc$.
Hypergraph $\Hrc$ built on $\rc$ have
    edges of size at least $\alpha k$.
Hence, the probability that an edge is monochromatic in the random coloring (which correspond to the occurrence of a bad event) is at most $2^{1 - \alpha k}$.
The upper bound on edge degrees is inherited from the input hypergraph, i.e. the maximum edge degree of $\Hrc$ is at most $\Delta$.
By the assumptions of Theorem \ref{thm:main} we have $\eul\; 2^{1 - \alpha k} (\Delta+1) < 1$
    and hence the assumptions of the symmetric Local Lemma are satisfied.
By the results of \cite{moser-tardos} the expected running time of RESAMPLE (in the number of steps) is $E_t = |E(\Hrc)|/\Delta$.
Therefore, if we run that procedure for $2E_t$ steps, the probability of not finding a proper coloring is at most $1/2$.

Repeating RESAMPLE (with the bounded number of steps) up to $2\log(m)$ times,
    each time starting with a new random initial coloring,
    we obtain that the probability of failure is not greater than $2^{-2\log(m)} = m^{-2}$.
During the whole algorithm there are at most $m$ calls to \fun{color\_component}.
Hence, the probability that we observe a failure in one of these calls is at most $1/m$.
\end{proof}

\subsection{Running time and space}

By explicit checks of failure conditions, the running time of a query is always polylogarithmic.
As we treat $k$ and $\Delta$ as constants, inspecting any neighborhood (of a constant distance) of a vertex or edge is made in time $\Oh{1}$.
In particular, for any set of vertices $A$,
    determining whether there exists a structure of any constant distance from $A$ can be done in time $\Oh{|A|}$.
Therefore looking for a structure containing a queried vertex $v$ in a call to \fun{query} is done in a constant time.
Moreover, procedure \fun{build\_component} performs at most $\Oh{\log(m)}$ iterations and in each iteration
    looking for objects satisfying the expansion rules takes time $\Oh{\log(m)}$.
Note that the components passed to \fun{color\_component} have at most logarithmic sizes,
    therefore one run of RESAMPLE require $\Oh{\log(m)}$ steps.
Each step can be processed in $\Oh{1}$ time.
Then, the whole procedure (with $\Oh{\log(m)}$ trials) takes time $\Oh{\log^2(m)}$.
In total we need $\Oh{\log^2(m)}$ time per query.

Regarding space -- we chose to remember states and colors of \state{colored} vertices.
Thus we need $\Oh{n}$ space that is preserved between calls.
Additionally, for each query we need $\Oh{\log(m)}$ temporary memory.

\subsection{Completing the proof}

\begin{proof}[Proof of Theorem \ref{thm:main}]
    We argued in Section \ref{ss:correctness} that a successful evaluation of the algorithm described in Section \ref{sec:Algorithm} constructs a coloring that can be completed to a proper one.
    Moreover, the answers for vertices queried multiple times are consistent and each query is answered in polylogarithmic time.
    It remains to show that the computation is indeed successful with large probability.
    As explained in Section \ref{ss:failure} there are two possible reasons of failure.
    It happens when a component becomes too large in \fun{build\_component}
        or when multiple tries of running RESAMPLE do not find a proper coloring in \fun{color\_component}.
    Propositions \ref{main-prop} and \ref{coloring-prop},
        and the fact that we consider hypergraphs with $n$ and $m$ of the same order,
        imply that with probability $1-O(1/n)$ none of these events take place.
    Hence the computation fails with probability $O(1/n)$.
\end{proof}

\section{Final thoughts}

It seems natural to expect that the described result can be improved by considering deeper potential resample structures.
Indeed, the witnessing edges that are not initially monochromatic also need some other edges to be recolored
        before they become a candidates for resampling.
That intuition is misleading.
Our analysis shows that the most constraining factor is the probability that an edge contains at least $(1-\alpha)k$ vertices of one color
    (which is the condition demanded from the witnessing edges).
As a result the most pessimistic case is when the resampled components consists mainly of witnessing edges
    (i.e. the number of base edges is as small as possible).
Deeper structures does not essentially influence this pessimistic proportion
    (just like the proportion between numbers of leaves and of inner nodes in regular trees is almost the same for all sizes).

Finally, we remark that it is possible to make the presented algorithm query oblivious and strongly LCA.
Additionally, using techniques of \cite{ARVX}, the number of necessary random bits can be reduced to polylogarithmic value.
We are going to describe these improvements in detail in the full version of this paper.

\bibliographystyle{amsplain}    
\bibliography{pb}

\providecommand{\bysame}{\leavevmode\hbox to3em{\hrulefill}\thinspace}
\providecommand{\MR}{\relax\ifhmode\unskip\space\fi MR }
\providecommand{\MRhref}[2]{%
  \href{http://www.ams.org/mathscinet-getitem?mr=#1}{#2}
}
\providecommand{\href}[2]{#2}
\begin{thebibliography}{10}

\bibitem{AGI}
Dimitris Achlioptas, Themis Gouleakis, and Fotis Iliopoulos, \emph{Simple local
  computation algorithms for the general lov\'{a}sz local lemma}, Proceedings
  of the 32nd ACM Symposium on Parallelism in Algorithms and Architectures (New
  York, NY, USA), SPAA '20, Association for Computing Machinery, 2020,
  p.~1–10.

\bibitem{AGI-arxiv}
Dimitris Achlioptas, Themis Gouleakis, and Fotis Iliopoulos, \emph{Simple local
  computation algorithms for the general lovasz local lemma}, 2020.

\bibitem{alon}
Noga Alon, \emph{A parallel algorithmic version of the local lemma}, Random
  Structures Algorithms \textbf{2} (1991), no.~4, 367--378. \MR{1125955}

\bibitem{ARVX}
Noga Alon, Ronitt Rubinfeld, Shai Vardi, and Ning Xie, \emph{Space-efficient
  local computation algorithms}, Proceedings of the {T}wenty-{T}hird {A}nnual
  {ACM}-{SIAM} {S}ymposium on {D}iscrete {A}lgorithms, ACM, New York, 2012,
  pp.~1132--1139. \MR{3205279}

\bibitem{Beck91}
J\'{o}zsef Beck, \emph{An algorithmic approach to the {L}ov\'{a}sz local lemma.
  {I}}, Random Structures Algorithms \textbf{2} (1991), no.~4, 343--365.
  \MR{1125954}

\bibitem{non-uniform}
Artur Czumaj and Christian Scheideler, \emph{Coloring non-uniform hypergraphs:
  a new algorithmic approach to the general {L}ov\'asz local lemma},
  Proceedings of the {E}leventh {A}nnual {ACM}-{SIAM} {S}ymposium on {D}iscrete
  {A}lgorithms ({S}an {F}rancisco, {CA}, 2000), ACM, New York, 2000,
  pp.~30--39. \MR{1754838}

\bibitem{EL1975}
Paul Erd{\H{o}}s and L{\'a}szl{\'o} Lov{\'a}sz, \emph{Problems and results on
  {$3$}-chromatic hypergraphs and some related questions}, Infinite and finite
  sets ({C}olloq., {K}eszthely, 1973; dedicated to {P}. {E}rd{\H o}s on his
  60th birthday), {V}ol. {II}, Colloquia Mathematica Societatis J{\'{a}}nos
  Bolyai, vol.~10, North-Holland, Amsterdam, 1975, pp.~609--627.

\bibitem{Lov73}
L{\'a}szl{\'o} Lov{\'a}sz, \emph{Coverings and coloring of hypergraphs},
  Proceedings of the {F}ourth {S}outheastern {C}onference on {C}ombinatorics,
  {G}raph {T}heory, and {C}omputing ({F}lorida {A}tlantic {U}niv., {B}oca
  {R}aton, {F}la., 1973), 1973, pp.~3--12.

\bibitem{moser-tardos}
Robin~A. Moser and G{\'a}bor Tardos, \emph{A constructive proof of the general
  {L}ov\'asz {L}ocal {L}emma}, J. ACM \textbf{57} (2010), no.~2, Art. 11, 15.
  \MR{2606086}

\bibitem{fast-local}
Ronitt Rubinfeld, Gil Tamir, Shai Vardi, and Ning Xie, \emph{N.: Fast local
  computation algorithms}, ICS 2011, 2011, pp.~223--238.

\end{thebibliography}

\appendix

\section{Bounding component sizes}
\label{sec:compSize}
It is a common aaproach to define some coupling that associate witness structures met during evaluation of the algorithm with random trees generated by a Galton-Watson process.
In that language, the second of Alons assumptions (\ref{eq:AlonsConditions}) can be interpreted as a condition that the associated process is subcritical.
Our analysis do not follow that path, since it relies on the amortization of two kinds of basic events that are entailed by the event described by a witness structure.

For the sake of analysing procedure \fun{build\_component}, we alter slightly its behaviour by allowing it to build arbitrarily large components.
Then, the event that the unrestricted variant constructs a large component corresponds to the event that the original variant fails.
For the rest of this section \fun{build\_component} denotes the modified version.

In order to show that constructing a large component is unlikely, we associate each component build in a call to \fun{build\_component} with an event determined by the initial coloring.
Following common technique, these events are described by structures called witness trees.
We show that for all possible witness trees of a certain size, the probability that any of the corresponding events holds is small.

\subsection{Witness tree - basic properties} \label{sec-witness}

A witness tree can be identified with a subgraph of the line graph\footnote{
    The line graph $L(H)$ is the graph built on $E(H)$ in which two distinct vertices (representing edges of $H$) are adjacent if the corresponding edges intersect.
} $L(H)$ of $H$, equipped with some extra information.

\begin{definition}
    \emph{Witness tree} is a triple $(T,\lab, \ort)$ in which
\begin{enumerate}
    \item $T$ is a tree which is a subgraph of the line graph $L(H)$,
    \item $\lab$ assigns labels from $\{\type{W}, \type{M}, \type{J}\}$ to the vertices of $T$,
    \item $\ort$ is a function that directs the edges of $T$.
\end{enumerate}
\end{definition}

The base structure of a witness tree, a tree $T$, is a subgraph of $L(H)$,
    so its vertices are actually edges of $H$.
That can be quite misleading, therefore
    we prefer to work as if $T$ was just a separate tree paired with an isomorphism into a subgraph of $L(H)$.
We use term \emph{nodes} for the vertices of witness trees, to keep a clear distinction between them and the vertices of $H$.
There is no need to explicitly represent the isomorphism.
Instead, we say about the edge represented by or corresponding to a node.

Size of a witness tree is the number of its nodes.
It is known fact that an infinite $\Delta$-regular rooted tree contains
$
    \frac{1}{(\Delta-1) u +1} \binom{\Delta u}{u} < (\eul \Delta)^u
$
rooted subtrees of size $u$.
Therefore, the number of trees of size $u$, that are subgraphs of $L(H)$ and contain a node that corresponds to a specific edge of $H$, is smaller than $(\eul \Delta)^u$.
This implies that the number of all subtrees of $L(H)$ of size $u$ is smaller than $m(\eul \Delta)^u$.
For a given tree $T$ with $u$ nodes,
    there are $3^u$ candidates for labeling functions $\lab$,
    and $2^{u-1}$ choices for $\ort$.
That allows to bound the total number of witness trees of size $u$ in the following way.
\begin{corollary}
    \label{cor:counting-witness-trees}
    The number of witness trees of size $u$ is smaller than $m(6 \eul \Delta)^u$.
\end{corollary}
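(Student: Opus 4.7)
The plan is to assemble the three ingredients displayed in the paragraph immediately preceding the corollary and multiply them through.

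First I would bound the number of unlabeled, undirected subtrees of $L(H)$ on $u$ nodes. Because the maximum edge degree of $H$ equals $\Delta$, the line graph $L(H)$ has maximum degree at most $\Delta$, so any subtree of $L(H)$ rooted at a chosen node embeds into an infinite $\Delta$-regular rooted tree. Using the cited bound $\frac{1}{(\Delta-1)u+1}\binom{\Delta u}{u} < (\eul\Delta)^u$ for the number of rooted subtrees of size $u$ inside the infinite $\Delta$-regular rooted tree, and summing over the $m$ possible choices of the root node (one per edge of $H$), the total number of subtrees of $L(H)$ of size $u$ is at most $m(\eul\Delta)^u$.

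Next I would account for the extra decoration attached to such a tree. For a fixed underlying tree $T$ on $u$ nodes, the labeling $\lab$ chooses one of the three labels $\type{W},\type{M},\type{J}$ at each node, giving $3^u$ possibilities, and the orientation $\ort$ chooses one of two directions on each of the $u-1$ edges, giving $2^{u-1}<2^u$ possibilities. Multiplying, the total number of witness trees of size $u$ is at most
\[
    m(\eul\Delta)^u \cdot 3^u \cdot 2^u \;=\; m(6\eul\Delta)^u,
\]
which is the claimed bound.

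There is no serious obstacle: this is genuinely an immediate corollary of the three counts already displayed in the preamble. The only point requiring a sentence of care is that counting subtrees with a distinguished root and then summing over the $m$ possible root edges is a valid (indeed loose) overestimate of the number of subtrees of $L(H)$ of the given size, since each such subtree is counted at least once.
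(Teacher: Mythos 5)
Your proposal is correct and follows exactly the paper's intended argument: bound the number of subtrees of $L(H)$ of size $u$ by $m(\eul\Delta)^u$ via the Fuss--Catalan count for the infinite $\Delta$-regular rooted tree, then multiply by $3^u$ label choices and $2^{u-1}<2^u$ orientation choices. Nothing is missing.
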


Edge orientation within a witness tree can be used to derive a partial order on its nodes.
Nodes without outgoing edges are called \emph{roots}.
Then, for each node $x$ we define its \emph{depth} as the length of the maximum directed path from $x$ to one of the reachable roots.
Roots have depth $0$.

Let $\tau$ be a witness tree.
For a positive $d$, let $V_{<d}(\tau)$ be the set of vertices belonging to the edges assigned to the nodes of $\tau$ of depth strictly smaller than $d$.
With every node $x$ of $\tau$ with type \type{M} or \type{W},
we associate set of vertices $w(x)$ defined as follows
\[
    w(x) = f_x \setminus V_{<d}(\tau),
\]
where $d$ is the depth of $x$, and $f_x$ is the edge represented by $x$.
Note that, for any pair of nodes $x$ and $y$ of different depths, sets $w(x)$ and $w(y)$ are disjoint.

\subsection{Events described by witness trees}

For a witness tree $\tau$, event $\event{\tau}$ is defined as the conjunction of the following basic events associated to nodes of $\tau$:
\begin{itemize}
    \item for every node $x$ of type \type{M}, set $w(x)$ is initially monochromatic,
    \item for every node $y$ of type \type{W}, set $w(y)$ has initially at least $(1-\alpha)k$ vertices of one color.
\end{itemize}
Observe, that $\event{\tau}$ is determined by the initial coloring.
Note also that the definition of $\event{\tau}$ completely ignores nodes of type \type{J}.

\subsection{Proper witness trees}

Witness trees are used to describe events determined by the initial coloring.
The nature of the events of our interest imposes additional properties.
Most important of these are captured by the following definition.

\begin{definition}
    A witness tree is \emph{proper} if all of the following conditions hold:
    \begin{enumerate}[(i)]
        \item it has more nodes of type \type{M} than nodes of type \type{J},
        \label{cond-more-M-than-J}
        \item nodes of type \type{M} or \type{W} with the same depth, correspond to pairwise disjoint edges,
        \label{cond-same-depth}
    \item every node $x$ of type \type{M} or \type{W}, satisfies $|w(x)| > (1-\alpha)k$.
        \label{cond-big-tail}
    \end{enumerate}
\end{definition}

The nodes of type \type{J} are very weakly constrained by the above definition.
It reflects the fact that their sole purpose is to keep a whole structure connected.

Condition (\ref{cond-same-depth}) in the definition of a proper witness structure,
    augments the definition of $w(.)$ so that, for a proper structure, all these sets are disjoint.
That justifies the following observation.

\begin{observation} \label{obs-proper-independent-events}
    For a proper witness tree $\tau$, all the basic events associated with the nodes of $\tau$ are mutually independent.
\end{observation}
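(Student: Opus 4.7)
\textbf{Proof plan for Observation \ref{obs-proper-independent-events}.}

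The plan is to show that the sets $w(x)$ attached to the type-\type{M} and type-\type{W} nodes of a proper witness tree $\tau$ are pairwise disjoint. Once that is established, mutual independence follows immediately: every basic event is a property of the initial colors restricted to the corresponding set $w(x)$, and the initial coloring is a product of independent uniform coin flips, one per vertex. Disjoint sets of vertices therefore give rise to independent events, and a family of events measurable with respect to disjoint coordinate sets in a product probability space is mutually independent.

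So the task reduces to verifying pairwise disjointness of the sets $w(x)$. Let $x, y$ be two distinct nodes of type \type{M} or \type{W}, with associated edges $f_x, f_y$ of $H$ and depths $d_x, d_y$ in $\tau$. I would split into two cases. First, if $d_x = d_y$, condition (\ref{cond-same-depth}) in the definition of a proper witness tree guarantees that $f_x$ and $f_y$ are disjoint as subsets of $V(H)$; since $w(x)\subseteq f_x$ and $w(y)\subseteq f_y$, we get $w(x)\cap w(y)=\emptyset$. Second, if the depths differ, say $d_x < d_y$, then by definition $f_x\subseteq V_{<d_y}(\tau)$, whereas $w(y) = f_y \setminus V_{<d_y}(\tau)$. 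Hence $w(y)\cap f_x = \emptyset$, and a fortiori $w(y)\cap w(x)=\emptyset$.

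Combining both cases, the family $\{w(x) : x \text{ a node of type } \type{M} \text{ or } \type{W}\}$ is a family of pairwise disjoint subsets of $V(H)$. Since the basic event assigned to a type-\type{M} node $x$ depends only on the initial colors of the vertices in $w(x)$ (asking that they are all equal), and the one assigned to a type-\type{W} node $y$ depends only on the initial colors of the vertices in $w(y)$ (asking that at least $(1-\alpha)k$ of them share a color), all basic events are measurable with respect to disjoint coordinate projections of the initial coloring. Mutual independence is then a standard consequence of the product structure of the initial coloring.

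There is no real obstacle here: the only subtlety is that one must invoke condition (\ref{cond-same-depth}) of properness to handle the equal-depth case, while the different-depth case is forced by the very definition of $w(\cdot)$. The remark in the paragraph following the definition of $w(\cdot)$ already records the different-depth half of this statement, so the additional content of the observation is exactly the use of properness to dispose of the same-depth case.
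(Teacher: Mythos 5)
Your proposal is correct and follows essentially the same route the paper takes: the paper notes that nodes of different depths have disjoint $w(\cdot)$ sets by definition, and that condition (ii) of properness handles nodes of equal depth, so that all $w(x)$ are pairwise disjoint and the basic events (each measurable with respect to the coordinates in its own $w(x)$) are mutually independent. You have merely written out the two-case argument explicitly, which is exactly the reasoning the paper leaves implicit.
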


\subsection{Constructing a structure for a component} \label{sec-construct-structure}

We show how to associate a proper witness tree of related size to
    every component constructed during a computation.
It is discovered simultaneously with the progress of \fun{build\_component}.
A crucial property is that
    for every component we assign a witness tree $\tau$,
    for which $\event{\tau}$ holds in the initial coloring.

Consider an evaluation of \fun{build\_component}. 
We focus on edges which causes growth of the component - mono-tails and edges from fresh structures.
We sort them in the order in which our procedure adds their vertices to component,
assuming that for a fresh structure we first consider witnessing edges (according to their indices),
    and then the base edge (which at that point can be classified as a mono-tail).
Then, we can associate with each run of the \fun{build\_component} a sequence of extending edges $(f_1, \ldots, f_r)$.

Our construction of witness tree follows such sequence $(f_1, \ldots, f_r)$,
    adding nodes related to $f_i$ one by one.
During this process, we do not require that the current structure is connected,
    so we refer to it as \emph{forest}.
We use term \emph{subcomponent} for
    connected components of the forest constructed so far.
We say that an edge $f$ interects a subcomponent if the subcomponent contains a node, which represents an edge that intersects $f$.
Note that, when a new node is added to the forest, all its outgoing edges are declared 
    at this moment and remain unchanged till the end of construction.
This implies that the depth of the inserted node, cannot be changed in the subsequent steps.

Starting from the empty forest,
for every consecutive edge $f_j$ we proceed as follows:
\begin{enumerate}
    \item add to the forest a new node $x_j$ representing $f_j$;
    \item label $x_j$ as
    \begin{itemize}
        \item \type{W}, if $f_j$ extended component as a witnessing edge,
        \item \type{M}, if $f_j$ extended component as a mono-tail edge;
    \end{itemize}
\item add edges between $x_j$ and nodes added to the forest so far, in the following way:
    \begin{enumerate}
        \item \label{structure-add-disjoint} if $f_j$ is disjoint with all previously considered edges, 
                then leave $x_j$ without edges (it becomes the root of the new subcomponent),
            \item \label{structure-add-intersecting} otherwise,
                for each subcomponent containing some nodes representing neighbors of $f_j$,
                select such a node with the greatest depth (in case of a tie, choose any one),
                and add to the forest an edge from $x_j$ to that node.
    \end{enumerate}
\end{enumerate}

Note that, case (\ref{structure-add-disjoint}) can happen only when processing edges that form a fresh structure.
Observe also, that in case (\ref{structure-add-intersecting}), if at least two subcomponents are involved, then these subcomponents are merged into one.

When the above construction is finished the forest may still not be connected.
However, since each new subcomponent starts from a fresh structure which is in distance 1 from the component constructed so far,
    for each subcomponent there exists an edge which intersects it and one of the previously created subcomponents.
Hence, we repeat the following steps, as long as there are more than one subcomponent:
\begin{enumerate}
    \item pick any edge $f$ which intersects two (or more) subcomponents,
    \item create a node of type \type{J}, which corresponds to $f$,
    \item add outgoing edges to each subcomponent it intersects.
\end{enumerate}
In the last step it does not matter to which nodes inside subcomponent we attach edges, but for consistency we choose nodes of the greatest depths.
In this way a construction of the witness structure is completed.

\begin{observation} \label{obs-proper-structure}
    The witness tree constructed for the component is proper.
\end{observation}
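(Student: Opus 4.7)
My plan is to verify the three defining conditions of a proper witness tree in turn, all resting on a single depth-monotonicity invariant of the construction in Section~\ref{sec-construct-structure}. The invariant I would first isolate is: whenever two distinct nodes $x$ and $y$ of $\tau$ correspond to intersecting hyperedges, the one inserted later has strictly greater depth. This is immediate from step~3(b) of the construction, since the newly inserted node attaches an outgoing edge to the deepest node in each intersected subcomponent, and that selected node is at least as deep as any other pre-existing neighbor in the same subcomponent; hence the new node's depth exceeds that of every already-present neighbor whose edge it meets.

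Condition~(\ref{cond-same-depth}) then follows immediately, because two distinct nodes at the same depth with intersecting edges would contradict the invariant. For condition~(\ref{cond-big-tail}) I fix a node $x$ of type \type{M} or \type{W} with depth $d$ and prove the equality $f_x \cap V_{<d}(\tau) = f_x \cap \rc_x$, where $\rc_x$ denotes the state of the component just before $x$ was added. The inclusion $\supseteq$ is immediate: every pre-existing neighbor of $x$ has depth at most $d-1$ and therefore lies in $V_{<d}(\tau)$. The inclusion $\subseteq$ uses the other direction of the invariant: any node added after $x$ whose edge meets $f_x$ has depth strictly greater than $d$, and so contributes nothing to $V_{<d}(\tau)$. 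Once this equality is established, the size bound on $|w(x)|$ reduces to the corresponding definitional lower bound at the moment $x$ was added---the mono-tail condition for nodes of type \type{M}; and for a node of type \type{W} representing the $i$-th witnessing edge $h_i$ of a fresh structure $S$, the disjointness of $V(S)$ from the pre-existing component gives $h_i \cap \rc_x = h_i \cap (h_1\cup\cdots\cup h_{i-1})$, so that Definition~\ref{lca-structure} supplies the required $(1-\alpha)k$ vertices of one color.

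The plan for condition~(\ref{cond-more-M-than-J}) is a counting argument. The key observation is that whenever a fresh structure $S$ is processed, $V(S)$ is disjoint from the current component $\rc$, so none of $S$'s edges intersects any previously recorded forest edge; consequently processing $S$ never merges pre-existing subcomponents, and because the base edge of $S$ (processed last) is adjacent to every witnessing edge of $S$, all nodes introduced by $S$ end up in a single new subcomponent. Mono-tails added in step~2 of \fun{build\_component}, in contrast, always share a vertex with $\rc$ and hence attach to an existing subcomponent, never creating a new one. Thus at the end of the first phase the number of subcomponents equals the number $F$ of fresh structures used, so phase two needs at most $F-1$ nodes of type \type{J} to connect them into one tree. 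Since each fresh structure contributes at least one node of type \type{M} (its base edge), there are at least $F$ such nodes in total, giving strictly more nodes of type \type{M} than of type \type{J}.

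The main obstacle I expect is the bookkeeping behind condition~(\ref{cond-big-tail}): the equality $f_x \cap V_{<d}(\tau) = f_x \cap \rc_x$ is not a tautology---it uses the depth invariant in both directions, and relies on the disjointness of fresh structures from the pre-existing component to translate the witnessing-edge condition into a statement about $\rc_x$. The remaining two conditions are comparatively routine once the depth invariant and the fresh-structure disjointness observation are in hand.
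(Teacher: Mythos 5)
Your proposal is correct and follows essentially the same route as the paper's own proof: condition~(\ref{cond-same-depth}) and Claim~\ref{cl-preserve-dependency} both rest on the same depth-monotonicity fact, and condition~(\ref{cond-more-M-than-J}) is handled by the same ``one \type{M}-node per subcomponent, one fewer \type{J}-node than subcomponents'' count. You are a bit more explicit than the paper in two useful places: you state the depth invariant for \emph{all} pairs of nodes including the \type{J}-nodes introduced in phase two (which is in fact needed for the inclusion $f_x \cap V_{<d}(\tau) \subseteq f_x \cap \rc_x$, since a low-depth \type{J}-node meeting $f_x$ would otherwise enlarge $V_{<d}(\tau)$), and you spell out why only fresh structures spawn new subcomponents. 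One small imprecision: the number of subcomponents at the end of phase one need not \emph{equal} $F$ (a mono-tail hitting two subcomponents merges them), but it is at most $F$, which is all your count uses, so the conclusion stands.
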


\begin{proof}
    Condition (\ref{cond-more-M-than-J}) is satisfied,
        since each node of type \type{J} merges together at least two subcomponents,
        and each of them contains at least one node of type \type{M}
        (the node that corresponds to the base edge of the structure that started the subcomponent).
    The second condition (\ref{cond-same-depth}) follows from the fact that when adding a node $x_j$, edge attaching rules ensure that the depth of $x_j$ is strictly greater than the depths of the nodes
    corresponding to the edges from $\{f_1, \ldots, f_{j-1}\}$ that intersect $f_j$.
    The last condition (\ref{cond-big-tail}) is a consequence of the following claim.
\begin{claim} \label{cl-preserve-dependency}
    Let $\rc_j = \bigcup \{f_1, \ldots, f_{j-1}\}$, i.e. the set of vertices of the component just before it was extended with $f_j$.
    For every node $x_j$ of type \type{M} or \type{W} added to the constructed witness tree $\tau$ we have
$
    w(x_j) = f_j \setminus \rc_j.
$
    Moreover, the size of $w(x_j)$ is strictly greater than $(1-\alpha)k$.
\end{claim}
    Let $D_j = N(f_j) \cap \{f_1, \ldots, f_{j-1} \}$ and $d$ be the depth of $x_j$ in $\tau$.
    It is obvious that $f_j \cap \rc_j = f_j \cap V(D_j)$.
    Observe that nodes corresponding to the edges of $D_j$ have depth strictly smaller than $d$ so $V(D_j) \subset V_{<d}(\tau)$.
    Since for any edge $f_l$ which follows (i.e. $l>j$) and intersects $f_j$,
        the depth of $x_l$ is strictly greater than $d$, it holds that $f_j \cap V(D_j) = f_j \cap V_{<d}(\tau)$.
    This implies that $w(x_j) = f_j \setminus V_{<d}(\tau) = f_j \setminus V(D_j) = f_j \setminus \rc_j$.
    To complete the proof, notice that $f_j \setminus \rc_j$ is a set of vertices by which $f_j$ extends a component and,
        by design of \fun{build\_component}, it has size more than $(1-\alpha)k$ for both witnessing and mono-tail edges.
\end{proof}

We can now extend above observation and show the desired property.

\begin{proposition} \label{prop:proper-event-holds}
    Let $\tau$ be a witness tree constructed for a component build by \fun{build\_component},
        then $\tau$ is proper and $\event{\tau}$ holds in the initial coloring.
\end{proposition}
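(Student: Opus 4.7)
The plan is to peel off the two conjuncts of the proposition and dispatch them separately. The first --- that $\tau$ is proper --- is exactly the content of Observation \ref{obs-proper-structure}, so the real work lies in showing that $\event{\tau}$ holds in the initial colouring. Since $\event{\tau}$ only imposes conditions on nodes of types \type{M} and \type{W}, and since Claim \ref{cl-preserve-dependency} identifies $w(x_j)$ with $f_j \setminus \rc_j$, the task reduces to verifying, node by node, the colour condition on $f_j \setminus \rc_j$ in the initial colouring. A useful running remark is that vertices lying outside the current component have not yet been recoloured by any call to \fun{color\_component}, so ``monochromatic'' and ``monochromatic in the initial colouring'' coincide for these sets.

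For a node $x_j$ of type \type{M}, the corresponding edge $f_j$ was added to $\rc_j$ as a mono-tail, so the definition of mono-tail forces $f_j \setminus \rc_j$ to be monochromatic; by the preceding remark this is monochromaticity in the initial colouring, which is exactly the basic event at $x_j$. For a node $x_j$ of type \type{W}, the edge $f_j$ is a witnessing edge $h_i$ of a fresh structure $S = (f, (h_1, \ldots, h_w))$ that was in distance $1$ from the component at the moment $S$ was used to extend it. Freshness gives $V(S) \cap \rc = \emptyset$, and the processing convention adds $h_1, \ldots, h_w$ in order before the base edge, so $h_i \cap \rc_j = h_i \cap V(\{h_1, \ldots, h_{i-1}\})$ and therefore $w(x_j) = h_i \setminus V(\{h_1, \ldots, h_{i-1}\})$. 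Clause (2) of Definition \ref{lca-structure} then supplies at least $(1-\alpha)k$ vertices of a single initial colour, which is precisely the \type{W} basic event. The base edge $f$ of $S$ is reclassified as a mono-tail at the moment it is appended and is handled analogously via clause (1) of Definition \ref{lca-structure}.

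I expect the only subtle point to be verifying that the bookkeeping used to build the extending sequence $(f_1, \ldots, f_r)$ really aligns with the internal indexing of each fresh structure, so that at the instant $h_i$ is processed, $\rc_j$ has already swallowed exactly $V(\{h_1, \ldots, h_{i-1}\})$ from $V(S)$ and nothing more. Once this alignment is spelled out, the argument is routine and rests entirely on the definitions already in place; no additional probabilistic or combinatorial input is needed.
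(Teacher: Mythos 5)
Your proof is correct and follows essentially the same route as the paper's: delegate properness to Observation \ref{obs-proper-structure}, then use Claim \ref{cl-preserve-dependency} to identify each $w(x_j)$ with $f_j \setminus \rc_j$, and finally unwind the definitions of mono-tail and witnessing edge to see that the corresponding basic events are satisfied in the initial colouring. You simply spell out the case split between \type{M} and \type{W} nodes (and the base edge of a fresh structure) more explicitly than the paper's very terse ``simple extension of the previously stated argument.''
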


\begin{proof}
    The proof is a simple extension of the previously stated argument.
    From Claim \ref{cl-preserve-dependency} we know that for a node $x$ of type \type{M} or \type{W},
        set $w(x)$ is exactly the set of vertices by which the corresponding edge extended the component.
    Then, if we look at definitions of witnessing and mono-tail edges,
        we see that they are equivalent to occurrence of appropriate event on set $w(x)$.
\end{proof}

\subsection{Probability of the event described by a proper witness tree}
We start with bounding the probabilities of the basic events associated with nodes of a proper witness structure.

\begin{proposition}
    For a set of vertices $S$ satisfying $(1-\alpha)k \leq |S| \leq k$ we have
    \begin{align}
        \prt{$S$ is initially monochromatic} &\leq 2^{1-(1-\alpha)k},
        \label{eq-bound-mono-tail}
        \\
        \prt{$S$ has initially at least $(1-\alpha)k$ vertices of the same color} &\leq 2^{1-(1-H(\alpha))k},
        \label{eq-bound-bad-tail}
    \end{align}
    where $H(x)$ is the binary entropy function.
\end{proposition}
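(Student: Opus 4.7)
The plan is to prove both inequalities by direct manipulation of the uniformly random initial coloring, in which each vertex is independently assigned one of two colors with probability $1/2$. Write $n=|S|$ throughout, and note the assumption $(1-\alpha)k\le n\le k$.

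For \eqref{eq-bound-mono-tail} the argument is a one-line computation: the monochromatic event is the disjoint union of the all-red and all-blue events, so its probability equals $2\cdot 2^{-n}=2^{1-n}$. Since $n\ge(1-\alpha)k$, this is at most $2^{1-(1-\alpha)k}$.

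For \eqref{eq-bound-bad-tail} I would first handle the case $n=k$ and then reduce the general case to it. When $n=k$, a union bound over the two colors gives
\[
\prt{$S$ has at least $(1-\alpha)k$ vertices of one color}
\le 2\cdot 2^{-k}\sum_{i=(1-\alpha)k}^{k}\binom{k}{i}
=2^{1-k}\sum_{i=0}^{\alpha k}\binom{k}{i}.
\]
The standard partial binomial bound $\sum_{i=0}^{\alpha k}\binom{k}{i}\le 2^{H(\alpha)k}$, valid because $\alpha<\alStar<1/2$, then yields exactly $2^{1-(1-H(\alpha))k}$.

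The one subtle point is the case $n<k$. A naive direct bound replacing $k$ by $n$ in the computation above would produce $2^{1-(1-H(\alpha))n}$, which is actually \emph{larger} than the target $2^{1-(1-H(\alpha))k}$ when $n<k$, so the direct route does not suffice. I would dispatch this by a monotonicity/embedding trick: enlarge $S$ to a set $S'\supseteq S$ of cardinality exactly $k$ by adjoining $k-n$ auxiliary vertices whose colors are drawn as fresh independent uniform bits (these vertices need not exist in $H$; we use them only to upper bound a probability). Since $S\subseteq S'$, the occurrence of ``at least $(1-\alpha)k$ vertices of some color in $S$'' forces the same property on $S'$; therefore the probability for $S$ is at most the probability for $S'$, and the already-established case $n=k$ closes the argument. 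This embedding step is really the only nontrivial move in the proposition, and it is what ensures the uniform bound $2^{1-(1-H(\alpha))k}$ independently of the exact size of $S$ within the admissible range.
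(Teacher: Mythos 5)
Your proposal is correct and follows essentially the same route as the paper: both reduce to the case $|S|=k$ and then apply the standard partial binomial sum bound $\sum_{i\le \alpha k}\binom{k}{i}\le 2^{H(\alpha)k}$. The only difference is that the paper simply asserts that $\frac{2}{2^s}\sum_{i=0}^{s-g}\binom{s}{i}$ is increasing in $s$ and evaluates at $s=k$, whereas your embedding of $S$ into a $k$-element superset $S'$ (noting the event on $S$ implies the event on $S'$) is a clean elementary proof of exactly that monotonicity claim.
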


\begin{proof}
    The first bound is straightforward -- the probability that all vertices in $S$ have the same initial color is $2/2^{|S|}$.
    For the second one, observe that having at least $(1-\alpha)k$ vertices in one color is equivalent to having at least $g = \ceil{(1-\alpha)k}$ of such vertices.
    Let $s = |S|$.
    Now the left hand side of (\ref{eq-bound-bad-tail}) is at most 
    \[
      \frac{2}{2^s}\sum_{i=g}^{s} \binom{s}{i} =
      \frac{2}{2^s}\sum_{i=0}^{s-g} \binom{s}{i}.
    \]
    The above expression is increasing with $s$, so to get an upper bound we assign to $s$ the maximum possible value $k$.
    The bound becomes
    \[
      \frac{2}{2^k} \sum_{i=0}^{k-g} \binom{k}{i}  =
      \frac{2}{2^k} \sum_{i\leq \alpha k} \binom{k}{i} \leq
      \frac{2}{2^k} 2^{H(\alpha)k},
    \]
    which completes the proof.
\end{proof}

Applying this proposition to sets $w(x)$ defined by a witness tree,
    and using lower bound on size of this sets in proper witness trees
        (condition (\ref{cond-big-tail}) of the corresponding definition),
    we obtain bounds for the probabilities of the basic events.

\begin{corollary}
    For a node of a proper witness tree, the probability of the basic event corresponding to that node is
    \begin{itemize}
        \item at most $\pMonoTail = 2^{1 - (1-\alpha)k}$, if the node is labeled with \type{M},
        \item at most $\pBadTail = 2^{1 - (1-H(\alpha))k}$, if the node is labelled with \type{W}.
    \end{itemize}
\end{corollary}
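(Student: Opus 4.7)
The plan is to derive the corollary as a direct consequence of the preceding proposition, by verifying that for every node $x$ of type \type{M} or \type{W} in a proper witness tree $\tau$, the associated set $w(x)$ satisfies the size hypothesis $(1-\alpha)k \le |w(x)| \le k$ demanded there, and then reading off the matching probability bound for the basic event of that node type.

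First I would observe that $w(x) = f_x \setminus V_{<d}(\tau) \subseteq f_x$, where $f_x$ is the edge of $H$ represented by $x$ and $d$ is the depth of $x$ in $\tau$. Since $H$ is $k$-uniform, $|f_x| = k$, giving immediately $|w(x)| \le k$. For the lower bound, condition (\ref{cond-big-tail}) of the definition of a proper witness tree asserts precisely $|w(x)| > (1-\alpha)k$ for nodes of type \type{M} or \type{W}. Thus the set $w(x)$ fits into the range required by the proposition.

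Next I would match each basic event to the corresponding bound. For a node of type \type{M}, the associated basic event is exactly that $w(x)$ is initially monochromatic, so (\ref{eq-bound-mono-tail}) bounds its probability by $2^{1-(1-\alpha)k} = \pMonoTail$. For a node of type \type{W}, the associated basic event is that $w(x)$ contains initially at least $(1-\alpha)k$ vertices of one color, so (\ref{eq-bound-bad-tail}) bounds its probability by $2^{1-(1-H(\alpha))k} = \pBadTail$. Combining these two cases yields the two bullets of the corollary.

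There is no real obstacle at this step; the statement is essentially a bookkeeping consequence of the proposition together with the size condition built into the definition of a proper witness tree. The only point worth checking carefully is that the proposition is formulated for a generic set $S$ of a prescribed size, whereas $w(x)$ depends on $\tau$; however, since here we are bounding the probability of a single basic event for a fixed (combinatorially specified) node, no independence or union-bound considerations enter, and the substitution $S := w(x)$ is legitimate.
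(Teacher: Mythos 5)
Your proof is correct and matches the paper's own (very brief) justification: apply the preceding proposition with $S := w(x)$, using $w(x) \subseteq f_x$ with $|f_x| = k$ for the upper bound on $|w(x)|$ and condition (\ref{cond-big-tail}) of the proper-witness-tree definition for the lower bound, then read off the two cases by node type.
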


Let $\tau$ be a proper witness tree of size $u$ and let $N_M(\tau), N_W(\tau), N_J(\tau)$ denote the numbers of nodes respectively of types \type{M}, \type{W}, and \type{J} in $\tau$.
Using the fact that the basic events defined by a proper witness tree are independent (Observation \ref{obs-proper-independent-events}), we get
\begin{equation} \label{eq-event-simple-bound}
    \pr{\event{\tau}} \leq \pBadTail{}^{N_W(\tau)} \pMonoTail{}^{N_M(\tau)}.
\end{equation}
Condition (\ref{cond-more-M-than-J}) of the definition of a proper witness tree
        implies that $N_M(\tau) > N_J(\tau)$. 
Therefore
    \[
        \pr{\event{\tau}} \leq \pBadTail{}^{u-2N_M(\tau)} (\pMonoTail{}^{1/2})^{2N_M(\tau)}.
    \]

\begin{corollary}
\label{cor:event-probability}
    Let $q = \max(\pBadTail,\pMonoTail{}^{1/2})$.
    For a proper witness tree $\tau$ of size $u$, we have
    \[
        \pr{\event{\tau}} \leq q^u.
    \]
\end{corollary}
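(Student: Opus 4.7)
The plan is to derive the corollary directly from inequality (\ref{eq-event-simple-bound}). Starting from $\pr{\event{\tau}} \leq \pBadTail^{N_W(\tau)} \pMonoTail^{N_M(\tau)}$, I would rewrite the second factor as $(\pMonoTail^{1/2})^{2 N_M(\tau)}$ so that both bases appearing are at most $q$. For $k$ taken sufficiently large (which is already assumed in Theorem \ref{thm:main}), both $\pBadTail = 2^{1-(1-H(\alpha))k}$ and $\pMonoTail^{1/2} = 2^{(1-(1-\alpha)k)/2}$ are at most $1$, hence $q \leq 1$ as well. Since the exponents $N_W(\tau)$ and $2 N_M(\tau)$ are non-negative, each factor can be bounded termwise by the corresponding power of $q$, yielding
\[
\pr{\event{\tau}} \leq q^{N_W(\tau) + 2 N_M(\tau)}.
\]

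It then remains to compare the exponent with $u$. Using the identity $N_W(\tau) + N_M(\tau) + N_J(\tau) = u$, one has $N_W(\tau) + 2 N_M(\tau) = u + \bigl(N_M(\tau) - N_J(\tau)\bigr)$. Condition (\ref{cond-more-M-than-J}) of the definition of a proper witness tree gives $N_M(\tau) > N_J(\tau)$, and since both quantities are integers this strict inequality upgrades to $N_M(\tau) - N_J(\tau) \geq 1$. In particular $N_W(\tau) + 2 N_M(\tau) \geq u$, and because $q \leq 1$ we conclude $q^{N_W(\tau) + 2 N_M(\tau)} \leq q^u$, finishing the proof.

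I do not see any real obstacle in this argument: it is essentially a bookkeeping step that combines the per-node bounds from the preceding corollary, the independence observation (Observation \ref{obs-proper-independent-events}), and the structural property $N_M(\tau) > N_J(\tau)$ built into the definition of a proper witness tree. The only implicit input is $q \leq 1$, which is already guaranteed by the large-$k$ assumption of Theorem \ref{thm:main}.
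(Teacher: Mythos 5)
Your proof is correct and follows the same strategy the paper uses: bound each independent basic-event probability by $q$, use $N_M(\tau) > N_J(\tau)$ from the definition of a proper witness tree to ensure the total exponent is at least $u$, and invoke $q \leq 1$. Your arrangement through $q^{N_W(\tau)+2N_M(\tau)} \leq q^u$ is, if anything, slightly more transparent than the paper's intermediate bound $\pBadTail^{u-2N_M(\tau)}(\pMonoTail^{1/2})^{2N_M(\tau)}$.
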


\subsection{Proof of Proposition \ref{main-prop}}

We start with a technical statement that sums up the results of this section, obtained so far.

\begin{proposition}
    \label{prop:Deltaq}
    Let $q = \max(\pBadTail,\pMonoTail{}^{1/2})$.
    Assume that
    \[
        6 \eul \Delta q < \frac{1}{2}.
    \]
    Then, for any sequence of queries,
        the probability that the algorithm, starting from uniformly random initial coloring,
            builds a component of size greater than $\vCompBound$,
        is less than $\frac{2}{m}$.
\end{proposition}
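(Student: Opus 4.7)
The plan is to apply a union bound over proper witness trees, using the tools assembled earlier in this appendix. By Proposition~\ref{prop:proper-event-holds}, every component $\rc$ built by \fun{build\_component} gives rise to an associated proper witness tree $\tau$ such that $\event{\tau}$ holds in the initial coloring. The size of $\tau$ is at least the length $r$ of the extending sequence $(f_1,\ldots,f_r)$ constructed in Section~\ref{sec-construct-structure}, and since the vertices of $\rc$ are covered by the edges $f_1,\ldots,f_r$, each of which has $k$ vertices, we have $|\rc| \leq k\,r \leq k\,|\tau|$. Consequently, a component of size strictly greater than $2k\log(m)$ forces the existence of a proper witness tree of size strictly greater than $2\log(m)$ whose event is realized by the initial coloring.

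Next, I would bound the probability of that event by a union bound. Corollary~\ref{cor:counting-witness-trees} gives at most $m(6\eul\Delta)^u$ witness trees of any size $u$, and Corollary~\ref{cor:event-probability} gives $\pr{\event{\tau}} \leq q^u$ for every proper tree $\tau$ of size $u$. Summing over all sizes $u > 2\log(m)$ yields
\[
    \pr{|\rc| > 2k\log(m)} \;\leq\; \sum_{u > 2\log(m)} m\,(6\eul\Delta\,q)^u.
\]

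Under the hypothesis $6\eul\Delta q < 1/2$ the geometric series has ratio smaller than $1/2$, hence its sum is bounded by twice its leading term, giving at most $2m\cdot(6\eul\Delta q)^{2\log(m)} \leq 2m\cdot 2^{-2\log(m)} = 2/m$ (here $\log$ is base $2$, matching the definition of $\vCompBound$). The main obstacle is conceptual rather than computational: one has to be certain that the construction of Section~\ref{sec-construct-structure} really does associate, to every execution in which a large component is built, a \emph{proper} witness tree whose event is forced by the initial coloring, so that the union bound over proper trees legitimately dominates the probability of failure. That content is entirely captured by Proposition~\ref{prop:proper-event-holds}, so it may be invoked as a black box, after which the remainder of the argument is just the counting-and-probability calculation above.
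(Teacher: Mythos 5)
Your proposal is correct and follows essentially the same route as the paper: invoke Proposition~\ref{prop:proper-event-holds} to pass from a large component to a proper witness tree whose event holds, then sum $m(6\eul\Delta)^u\cdot q^u$ over large $u$ using Corollaries~\ref{cor:counting-witness-trees} and~\ref{cor:event-probability} and the geometric-ratio hypothesis. In fact you make explicit a small step the paper leaves implicit, namely the inequality $|\rc|\leq k\,|\tau|$ relating component size to witness-tree size; otherwise the two arguments are the same.
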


\begin{proof}
    Observe that,
    if
        our algorithm builds a component of size greater than $\vCompBound$,
    then, by Proposition \ref{prop:proper-event-holds},
        there exists a witness tree $\tau$ of size at least $u = 2\log(m)$,
        which
            is proper and for which $\event{\tau}$ holds in the initial coloring.
    Therefore, the probability of building a large component is bounded by
        the probability that there is a proper witness tree of size at least $u$ for which the corresponding event holds.
    We bound the latter by estimating the expected number of such trees.
    For $q = \max(\pBadTail,\pMonoTail{}^{1/2})$, by linearity of expectation we get
    \[
        \Et{\# proper witness trees $\tau$ of size at least $u$ for which $\event{\tau}$ holds}
        \leq \sum_{j \geq u} m(6 \eul \Delta)^j \cdot q^j,
    \]
    (we used Corollaries \ref{cor:counting-witness-trees} and \ref{cor:event-probability}).
    Since from assumptions
    $
        6 \eul \Delta q < \frac{1}{2}
    $
    and $u = 2\log(m)$,
    we got that this probability is bounded by
    $
        2m (1/2)^u = 2/m
    $.
\end{proof}

\begin{proof}[Proof of Proposition \ref{main-prop}]
    For the sake of clarity, for the rest of this proof, we make dependence of considered parameters on $\alpha$ explicit.
    By Proposition \ref{prop:Deltaq}, condition
    \begin{equation}
    \label{eq:Dq}
        6 \eul \Delta q(\alpha) < 1/2.
    \end{equation}
    is sufficient to derive that the probability of building a large component is less then $2/m$.
    Recall that $q(\alpha) = \max(\pBadTail(\alpha),\pMonoTail(\alpha)^{1/2})$ and
    \[
        \pMonoTail(\alpha)^{1/2} = 2^{(1 - (1-\alpha) k)/2}\;\;\; \text{  and }\;\;\; \pBadTail(\alpha)= 2^{1 - (1-H(\alpha))k}.
    \]
    Therefore, if for some specific $\alpha'$ it holds $\frac{1-\alpha'}{2}> (1-H(\alpha'))$,
        then also for large enough $k$ we have $\pMonoTail(\alpha')^{1/2} < \pBadTail(\alpha')$.
    The unique solution of $\frac{1-\alpha}{2}= (1-H(\alpha))$ in interval $(0,1/2)$ is $\alpha_0 \approx 0.133$.
    Then, for every $\alpha \in (\alpha_0, 1/2)$ and all sufficiently large $k$, we have $q(\alpha)=\pBadTail(\alpha)$.
    For the values of $\alpha$ in that interval, condition (\ref{eq:Dq}) becomes
    \begin{equation}
        \label{eq:DeltaHal}
        6 \eul \Delta 2^{1 - (1-H(\alpha))k} < 1/2.
    \end{equation}
    Note that Theorem \ref{thm:main} explicitly assumes that, $\alpha < \alStar$ and
    \begin{equation}
        \label{eq:assumedDelta}
        \Delta +1 < \frac{ 2^{\alpha k}}{2 \eul}.
    \end{equation}
    Observe that if $\alpha < 1- H(\alpha)$, then for large enough $k$, inequality (\ref{eq:assumedDelta}) implies (\ref{eq:DeltaHal}).
    Condition $\alpha < 1- H(\alpha)$ holds for small $\alpha$,
    and $\alStar$ is precisely the point where the inequality changes.
    Therefore for any $\alpha$ that satisfies $\alpha_0 < \alpha < \alStar$, and all large enough $k$,
    condition (\ref{eq:Dq}) is implied by (\ref{eq:assumedDelta}).
    
    To complete the proof, note that, when $\alpha \leq \alpha_0$ then assumptions of Theorem \ref{thm:main} are also satisfied for any $\alpha' \in (\alpha_0, \alStar)$ and in this case we can run our procedure with $\alpha'$ instead of $\alpha$.

\end{proof}

\end{document}